\begin{document}

\DeclarePairedDelimiter\ceil{\lceil}{\rceil}
\DeclarePairedDelimiter\floor{\lfloor}{\rfloor}

\frontmatter          
\pagestyle{headings}  


\mainmatter              

\title{Finite automata with advice tapes}


\author{ U\u{g}ur K\"{u}\c{c}\"{u}k\inst{1} \and A. C. Cem Say\inst{1} \and  Abuzer Yakary{\i}lmaz\inst{2,}\thanks{Abuzer Yakary{\i}lmaz was partially supported by ERC Advanced Grant MQC.}}

\authorrunning{U\u{g}ur K\"{u}\c{c}\"{u}k et al.} 

\tocauthor{ U\u{g}ur K\"{u}\c{c}\"{u}k,  A. C. Cem Say, and  Abuzer Yakary{\i}lmaz}

\institute{Bo\u{g}azi\c{c}i University, Istanbul, Turkey\\
\email{ugur.kucuk@boun.edu.tr, say@boun.edu.tr}
 \and University of Latvia, R\={\i}ga, Latvia\\
 \email{abuzer@lu.lv} }


\maketitle

\begin{abstract}
We define a model of advised computation by finite automata where the advice is provided on a separate tape. We consider several variants of the model where  the advice is deterministic or randomized, the input tape head is  allowed real-time, one-way, or two-way access,  and the automaton is classical or quantum. We prove several separation results among these variants, demonstrate an infinite hierarchy of language classes recognized by automata with increasing advice lengths, and establish the relationships between this and the previously studied ways of providing advice to finite automata. 
\keywords{advised computation; finite automata; random advice}
\end{abstract}

\section{Introduction}
Advised computation is based on the idea of providing external trusted assistance, depending only on the length of the input, to a computational device in order to extend its capability for solving certain problems \cite{KL82}. Work on advised finite automaton models started with \cite{DH95}, where the advice string is prefixed to the input tape, and continued with a sequence of papers starting with \cite{TYL10}, where the automaton reads the advice  in parallel with the input from a separate track.

In this paper, we propose a new architecture for advised finite-state computation which enables the automata to use the advice more flexibly than the setups mentioned above.  The idea is simply to let the machine use a separate one-way tape for the advice, thereby enabling it to pause on the input tape while processing the advice, or vice versa. (Examples of finite-state machines with such a separate tape for \textit{untrusted} advice can  be seen in \cite{DS92}.) Our model differs from an alternative proposal of Freivalds for advised finite-state automata \cite{Fr10} in the number of allowed advice tapes, and the way in which the advice can be accessed. We consider many variants of our machines, where the advised automaton is classical or quantum, the tapes can be accessed in various alternative modes, and the advice is deterministic or randomized. The power of these variants are compared among themselves, and also with the corresponding instances of the alternative models in the literature. 

\section{Previous work}
\label{sec:previouswork}

Finite automata that take advice were first examined by Damm and Holzer \cite{DH95}. In their model, the advice string, which depends only on the length of the input, is placed on the input tape so that it precedes the original input. 
We call such a machine a \textit{finite automaton with advice prefix}. The automaton simply reads the advice first, and then goes on to scan the input.  Damm and Holzer studied $\mathsf{REG/}const$, which is the class of languages that can be recognized by real-time deterministic finite automata that use constant-length advice, and showed that letting the advice string's length to be an increasing function of the input string's length, say, a polynomial, does not enlarge the class of  languages recognized by such automata within this setup. They also  used Kolmogorov complexity arguments to prove that every additional bit of advice extends the class of languages that can be recognized by finite automata in this model, that is, $\mathsf{REG/}(k-1)\subsetneq\mathsf{REG/}k$, for all $k \geq 1$.

Another model of advised finite automata was examined by Tadaki et al. in \cite{TYL10}, and later by T. Yamakami in \cite{Ya08,Ya10,Ya11,Ya12}. This setup enables the automata to process the advice in parallel with the input, by simply placing the advice in a separate track of the input tape.  
In this manner, an advice string of length $n$ can be provided, and meaningfully utilized, for inputs of length $n$. This enhances the language recognition power, as can be seen by considering the relative ease of designing such a \textit{finite automaton with  advice track} for the language $\{a^nb^n|\  n \in \mathbb{N}\}$, which 
can not be recognized by any finite automaton with  advice prefix. Yamakami studied variants of this model with probabilistic and quantum automata, and randomized advice \cite{Ya10,Ya12}, and provided characterizations of the related classes of languages. Note that the track structure in this model both limits the length of the advice by the length of the input, and forces the advice to be scanned synchronously with the input.

R. Freivalds formulates and studies yet another model of advised finite automata in \cite{Fr10,AF10}.  Freivalds' model incorporates one or more separate tapes for the advice to be read from. Both the input and  the advice tapes have two-way heads.  
Unlike the previously mentioned models, the advice string for inputs of length $n$ are supposed to be useful for all shorter inputs as well, and some negative results depend on this additional requirement.

\section{Our model}
\label{sec:ourmodel}

We model advice as a string provided on a separate read-only tape. As usual, the content of the advice depends only on the length of the input. Formally, the advice to the automaton is determined by an advice function $h$, which is a mapping from $\mathbb{N}$ to strings in $\Gamma^*$, where $\Gamma$ is the advice alphabet. This function may or may not be computable.

Our advised machine model is then simply a finite automaton with two tapes. 
The transition function of  a (two-way) \textit{deterministic finite automaton with  advice tape} (dfat) determines the next move of the machine based on the current internal state, and the symbols scanned by the input and advice tape heads. Each move specifies the next state, and a head movement direction (right,  left, or stay-put) for each tape. A tape head that is allowed to move in all these directions is called \textit{two-way}. A head that is not allowed to move left is called \textit{one-way}. We may also require a head to be \textit{real-time}, forcing it to move to the right at every step. As will be shown, playing with these settings changes the computational  power of the resulting model. We assume that both the input and the advice strings are delimited by special end-marker symbols, beyond which the automaton does not attempt to move its heads. The machine halts and announces the corresponding decision when it enters one of the two special states $q_{accept}$ and $q_{reject}$. 

Unlike Freivalds \cite{Fr10}, we do not allow two-way motion of the advice tape head, as permitting this head to make leftward moves would cause ``unfair'' accounting of the space complexity of the advised machine.\footnote{See Section 5.3.1 of \cite{Go08} for a discussion of this issue in the context of certificate tape heads.} 

A language $L$ is said to be recognized by such a dfat $M$ using $O(f(n))$-length advice if there exists an advice function $h$ with the following properties:
\begin{itemize}
\item $|h(n)| \in O(f(n))$ for all $n \in \mathbb{N}$, and,
\item $M$ eventually halts and accepts when started with the input tape containing a string $x$ of length $n$, and the advice tape containing $h(n)$, if and only if $x \in L$.
\end{itemize} 

We need a notation for talking about  language families corresponding to different settings of the tape access modes and advice lengths. We will use the template ``$\mathsf{CLASS}/f(n)(\mathtt{specification\ list})$" for this purpose. In that template, the name of the complexity class corresponding to the unadvised, two-way version of the automaton in question will appear as the $\mathsf{CLASS}$ item. The function description $f(n)$ will denote that the machine uses advice strings of length $O(f(n))$ for inputs of length $n$. (General descriptors like $poly$ and $exp$, for  polynomial and exponential bounds, respectively, will also be used.) Any further specifications about, for instance, additionally restricted head movements, will be given  in the list within the final parentheses. For example, the class of languages recognized by dfat's with real-time input and one-way advice tapes that use linear amounts of advice will be denoted $\mathsf{SPACE(1)}/n(\mathtt{rt\mbox{-}input})$.\footnote{Although $\mathsf{SPACE(1)}$ is well known to equal the regular languages, we avoid the shorter notation $\mathsf{REG}/n$, which was used for the advice track model, and which will turn out to represent a strictly smaller class.}

We will also be examining randomized advice, as defined by Yamakami \cite{Ya10}. In this case, the advice is randomly selected from a set of alternatives according to a pre-specified probability distribution. Deterministic finite automata which use randomized advice can perform tasks which are impossible with deterministic advice \cite{Ya10}. The use of randomized advice will be indicated by the letter $R$ appearing before the advice length in our class names. We will use an item in the parenthesized specification list to indicate whether bounded or unbounded error language recognition is intended, when this is not clear from the core class name. 

We define the probabilistic and quantum versions of our advised automata by generalizing the definition for deterministic automata in the standard way, see, for instance, \cite{YS11A}. The transition function of a \textit{probabilistic finite automaton with advice tape} (pfat) specifies not necessarily one, but possibly many choices, associated with selection probabilities, for the next move at every step, with the well-formedness condition that the probabilities of these choices always add up to 1. In the case of \textit{quantum finite automata with advice tapes} (qfat's), each such choice is associated not with a probability, but with an amplitude (a real number in the interval [-1,1]). The  presentation of our results on qfat's will not require knowledge of technical details of their definitions such as well-formedness conditions, and we therefore omit these for space constraints, referring the reader to \cite{YS11A}. We should stress that there are many mutually inequivalent quantum finite automaton definitions in the literature, and we use the most powerful one \cite{Hi10,YS11A}. The quantum machines with advice tracks defined in \cite{Ya12} are based on an older model \cite{KW97}, and this difference will be significant in our discussion in Section \ref{sec:quantum}.

The notational convention introduced above is flexible enough to represent the language classes corresponding to the probabilistic and quantum advised machines as well. $\mathsf{BQSPACE(1)}/n(\mathtt{rt\mbox{-}input,  rt\mbox{-}advice })$, for instance,  denotes the class of languages recognized with bounded error by a qfat using linear-length advice, and real-time input and advice heads.

The model of real-time finite automata with advice tracks \cite{TYL10} is equivalent to our  model with a separate advice tape when we set both the input and advice tape heads to be real-time. Therefore, all the results shown for the advice track model are inherited for this setting of our machines. 
For instance, $\mathsf{SPACE(1)}/n(\mathtt{rt\mbox{-}input,rt\mbox{-}advice})=\mathsf{REG}/n$, 
where $\mathsf{REG}/n$ is defined in \cite{TYL10}.
On the other hand, the quantum class  $\mathsf{1QFA}/n$ of \cite{Ya12}  does \textit{not} equal $\mathsf{BQSPACE(1)}/n(\mathtt{rt\mbox{-}input,rt\mbox{-}advice})$, as we will show in Section \ref{sec:quantum}.

Note that we allow only one advice tape in our model. This is justified by the following observation about the great power of one-way finite automata with multiple advice tapes.

\begin{theorem}
 Every language can be recognized by a finite automaton with a one-way input tape and two one-way advice tapes.
\end{theorem}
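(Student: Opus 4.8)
Fix the language $L\subseteq\Sigma^*$, let $m=|\Sigma|$, and order the alphabet as $\sigma_0<\sigma_1<\cdots<\sigma_{m-1}$, writing $\mathrm{idx}(\sigma_i)=i$. The plan is to put the whole ``answer table'' for length $n$ onto the first advice tape and to use the second advice tape only to help the finite control navigate it. So, for input length $n$, let the first advice tape carry the characteristic bit string $\chi_n\in\{0,1\}^{m^n}$ obtained by listing the $m^n$ strings of length $n$ in lexicographic order and writing a $1$ exactly for the strings in $L$. The lexicographic rank of $x=x_1x_2\cdots x_n$ is
\[
r(x)\;=\;\sum_{j=1}^{n}\mathrm{idx}(x_j)\,m^{\,n-j},
\]
and $0\le r(x)\le m^n-1$, so if the machine can park the first advice-tape head on cell $r(x)$ after a single left-to-right sweep of $x$, it only has to read that one bit and accept iff it is $1$. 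The obstacle is that a finite-state device cannot itself compute the weighted sum defining $r(x)$, since the weights $m^{\,n-j}$ are unbounded.

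This is exactly what the second advice tape is for: it serves as a bank of precomputed ``step sizes''. For input length $n$ it carries, for $j=1,2,\ldots,n$ in this order, a block $B_j$ consisting of $m-1$ consecutive runs $1^{\,m^{\,n-j}}$ separated by a minor marker $\diamond$, and the blocks $B_1,\ldots,B_n$ are separated by a major marker $\#$. After moving past the left end-markers, the automaton reads the input one symbol at a time while maintaining the invariant that, just before it reads $x_j$, the head of advice tape $2$ is at the left end of $B_j$ and the head of advice tape $1$ is on cell $\sum_{i<j}\mathrm{idx}(x_i)\,m^{\,n-i}$ (for $j=1$ this just says both heads are at the left ends of their tapes). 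Having read $x_j$, the machine scans exactly $\mathrm{idx}(x_j)$ of the runs inside $B_j$ --- keeping a bounded counter in its state to know which run it is in, and a flag for whether it is still inside the current run of $1$s --- moving the advice-tape-$1$ head one cell right for each $1$ it passes; it then scans the remainder of $B_j$ together with the following $\#$ (or, when $j=n$, the end-marker of advice tape $2$) without touching advice tape $1$, and finally advances the input head to $x_{j+1}$. Each pass thus moves the advice-tape-$1$ head forward by exactly $\mathrm{idx}(x_j)\,m^{\,n-j}$, so the invariant is preserved. When the input end-marker appears, all $n$ blocks have been consumed, the advice-tape-$1$ head sits on cell $r(x)$, and the machine accepts iff it reads a $1$ there.

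It then remains to check the bookkeeping. We have $\sum_{j=1}^{n}(m-1)\,m^{\,n-j}=m^n-1$, so the advice-tape-$1$ head never leaves $\chi_n$; every quantity the finite control actually counts (which of the $\le m-1$ runs of the current block it is in, and whether it is inside or past a run) is bounded by $m$ and so fits in finitely many states; the input head and both advice-tape heads only ever move right or stay put, as the definition of a one-way head requires; and the total advice length is $|\chi_n|+\sum_j|B_j|+O(n)=2^{O(n)}$, which is permitted since the statement imposes no bound on advice length. The only cases needing separate mention are the degenerate ones: the empty input uses $\chi_0$, a one-cell tape holding the membership bit of the empty string, and a unary alphabet ($m=1$) makes every $B_j$ empty and $r(x)$ always $0$, so the construction collapses to reading a single advice bit.

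The one genuinely delicate point is the synchronization claim, which should be written out as an induction on $j$ establishing the invariant above: one has to argue that the ``skip the remainder of $B_j$'' phase leaves the advice-tape-$2$ head precisely at the left end of $B_{j+1}$, and that the input head is advanced exactly once per block, so that an input of length $n$ runs out in lock-step with the block sequence on advice tape $2$. Everything else is a routine --- if tedious --- finite-automaton construction. It is worth noting that this theorem is precisely the reason our model is restricted to a single advice tape: with two one-way advice tapes the advice-length hierarchies studied in the later sections would all collapse.
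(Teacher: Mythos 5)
Your construction is correct, but it is genuinely different from the one in the paper. The paper's proof puts the entire list $w_1c_1w_2c_2\cdots w_{m^n}c_{m^n}$ of all length-$n$ strings, each tagged with an accept/reject symbol, on the first advice tape, and has the automaton \emph{search} for the input by synchronized symbol-by-symbol matching: on a mismatch at offset $i$ it pauses the input head and uses the second advice tape (a fixed-period ruler $c_ac_r^nc_a\cdots$ for counting to $n+1$) to realign the first advice head to offset $i$ of the next candidate word. Your proof instead stores only the $m^n$-bit characteristic vector $\chi_n$ on tape one and uses tape two as a bank of geometrically shrinking runs $1^{m^{n-j}}$ so that the finite control can \emph{compute} the lexicographic rank $r(x)=\sum_j \mathrm{idx}(x_j)\,m^{n-j}$ by advancing the tape-one head the appropriate precomputed step size per input symbol, then read a single bit. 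What your approach buys: each input symbol is consumed exactly once, the correctness reduces to a clean per-block loop invariant with no need to argue that a prefix-skipping search cannot overshoot the input's entry in the list, and the lookup tape is a factor of about $n$ shorter. What the paper's approach buys: a conceptually simpler ``table with answers attached'' advice format and a second tape whose content does not depend on anything but $n$ in a very uniform way. Both constructions use exponential-length advice and both exploit exactly the same source of power, namely the ability to hold one head still while the other two move in lockstep; either would serve as a proof of the theorem.
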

\begin{proof}
Let $L$ be any language on the alphabet $\Sigma$. We construct a finite automaton $M$ that recognizes $L$ using a one-way input tape and two one-way advice tapes as follows.

Let $\Gamma = \Sigma \cup \{c_a,c_r\}$ be the advice alphabet, where $\Sigma \cap \{c_a,c_r\} = \emptyset$. For an input of length $n$, the advice on the first advice tape lists every string in $\Sigma^n$ in  alphabetical order, where every member of $L$ is followed by a  $c_a$, and every nonmember is followed by a $c_r$.  So the content of the first advice tape looks like $w_1c_1w_2c_2 \cdots w_{|\Sigma|^n}c_{|\Sigma|^n}$, where $w_i \in \Sigma^n$, and $c_i \in \{c_a,c_r\}$ for $i \in \{1,\ldots,|\Sigma|^n \}$.

The second advice tape  content looks like ``$c_ac_r^nc_ac_r^n \cdots c_ac_r^nc_a$'', with $|\Sigma|^n$ repetitions, and will be used by the machine for counting up to $n+1$ by moving between two consecutive $c_a$ symbols on this tape.

$M$ starts its computation while scanning the first symbols of the input string and $w_1$ on the first advice tape. It attempts to match the symbols it reads from the input tape and the first advice tape, moving synchronously on both tapes.  If the $i$th input symbol does not match the $i$th symbol of $w_j$, $M$ pauses on the input tape, while moving the two advice heads simultaneously until the second advice head reaches the next  $c_a$, thereby placing the first advice tape head on the $i$th position of $w_{j+1}$, where $1\leq i \leq n$, and $1 \leq j < |\Sigma|^n$. As the words on the first advice tape are ordered lexicographically, it is  guaranteed that  $M$ will eventually locate the word on the first advice tape that matches the input in this manner. $M$ halts when it sees the endmarker on the input tape, accepting if the symbol read at that point from the first advice tape is $c_a$, and rejecting otherwise. 
\end{proof}

\section{Deterministic finite automata with advice tapes}

It is clear that a machine with  advice tape is at least as powerful as a machine of the same type with  advice track, which in turn is superior to a corresponding machine with  advice prefix, as mentioned in Section \ref{sec:previouswork}. We will now show that allowing either one of the input and advice head to pause on their tapes does enlarge the class of recognized languages.

\begin{theorem}\label{theorem:equal}
$\mathsf{REG}/n \subsetneq \mathsf{SPACE(1)}/n(\mathtt{rt\mbox{-}input})$.  
\end{theorem}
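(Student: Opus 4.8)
The plan is to prove the inclusion, which is immediate, and then exhibit a language witnessing that the inclusion is proper. For the inclusion, recall from the remark just before the theorem that $\mathsf{REG}/n=\mathsf{SPACE(1)}/n(\mathtt{rt\mbox{-}input},\mathtt{rt\mbox{-}advice})$; dropping the requirement that the advice head move in real time only relaxes the model, so $\mathsf{REG}/n\subseteq\mathsf{SPACE(1)}/n(\mathtt{rt\mbox{-}input})$. For properness I would use the non-regular language $L=\{w\in\{a,b\}^{*}\mid |w|_{a}=|w|_{b}\}$, and show $L\in\mathsf{SPACE(1)}/n(\mathtt{rt\mbox{-}input})$ but $L\notin\mathsf{REG}/n$.

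To place $L$ in $\mathsf{SPACE(1)}/n(\mathtt{rt\mbox{-}input})$, I would use advice of length $n$: for even $n$ the advice consists of blank symbols except for one distinguished marker symbol, placed on the cell that the advice head occupies after exactly $n/2$ of its rightward moves; for odd $n$ the advice contains no marker (which is correct, since $L$ has no words of odd length), and the empty input is a trivial base case. The dfat runs its input head in real time and, at each step, advances the advice head by one cell exactly when the symbol just scanned is an $a$, leaving it stationary on a $b$. After the whole input of length $n$ has been read, the advice head has been advanced exactly $|w|_{a}$ times, hence sits on the marker iff $|w|_{a}=n/2$, which for $|w|=n$ is equivalent to $|w|_{a}=|w|_{b}$; the machine accepts in exactly that case. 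The advice head never leaves its tape, since $|w|_{a}\le n$.

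To show $L\notin\mathsf{REG}/n$, I would use a length-restricted Myhill--Nerode bound adapted to the advice-track model. Suppose $L$ were recognized by an $s$-state finite automaton with an advice track using advice function $h$ with $|h(n)|=n$. Since $h(n)$ depends on $n$ only, after the automaton has scanned the first $k$ cells on an input of length $n$ whose length-$k$ prefix is $u$, its state $q(u)$ is determined by $u$ alone; and if $q(u)=q(u')$ for two prefixes $u,u'\in\{a,b\}^{k}$, then $uz\in L\iff u'z\in L$ for every $z\in\{a,b\}^{n-k}$. Thus, taking $k=n/2$ for even $n$, the number of distinct sets $R(u)=\{z\in\{a,b\}^{n/2}\mid uz\in L\}$ over $u\in\{a,b\}^{n/2}$ is at most $s$. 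But $R(u)=\{z\in\{a,b\}^{n/2}\mid |z|_{a}-|z|_{b}=|u|_{b}-|u|_{a}\}$ is a nonempty set depending only on the integer $|u|_{a}-|u|_{b}$, which ranges over $n/2+1$ distinct values, so there are $n/2+1$ distinct residuals. Choosing $n$ with $n/2+1>s$ gives a contradiction, and the theorem follows.

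The main obstacle is the lower-bound half, and in particular isolating the right combinatorial invariant: because the advice in the track model is a function of the input length alone, a synchronous-track automaton has only boundedly many distinguishable behaviours on the prefixes of any fixed length, so a number of length-restricted residuals growing with $n$ is fatal. This is exactly the ``slack'' that the pause-capable advice tape exploits in the membership direction, and once it is pinned down the rest is routine counting; verifying the marker construction carefully (indexing of the advice cell, handling odd lengths and the empty word) is straightforward bookkeeping.
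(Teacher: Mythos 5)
Your proof is correct and takes essentially the same approach as the paper: the inclusion is observed to be trivial, the witness language is the same $\mathtt{EQUAL_2}$, and your marker-at-position-$n/2$ advice with the head advancing on each $a$ is equivalent to the paper's advice string $a^{n/2}$. The only difference is that the paper simply cites \cite{TYL10} for $\mathtt{EQUAL_2}\notin\mathsf{REG}/n$, whereas you prove it from scratch via the length-restricted residual count (at most $s$ distinguishable prefix behaviours in the synchronous track model versus $n/2+1$ residuals), which is a correct, self-contained rederivation of the relevant special case of Yamakami's characterization.
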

\begin{proof}
It follows trivially from the definitions of the classes that
\begin{center}
$\mathsf{REG}/n = \mathsf{SPACE(1)}/n(\mathtt{rt\mbox{-}input, rt\mbox{-}advice}) \subseteq \mathsf{SPACE(1)}/n(\mathtt{rt\mbox{-}input}).$ 
\end{center}
Let $|w|_{\sigma}$ denote the number of occurrences of symbol $\sigma$ in string $w$.
To show that the above subset relation is proper, we will consider the language 
$\mathtt{EQUAL_2} = \{w|\ w\in \{a,b\}^*\ \mbox{and}\ |w|_a = |w|_b\}$, which is known \cite{TYL10} to lie outside $ \mathsf{REG}/n $.

One can construct a finite automaton that recognizes $\mathtt{EQUAL_2}$ with real-time input and one-way access to linear advice as follows. For inputs of odd length, the automaton rejects the input. For inputs of even length, $n$, the advice function is $h(n) = a^{n/2}$. The automaton moves its advice head one position to the right for each $a$ that it reads on the input. The input is accepted if the number of $a$'s on the two tapes match, and rejected otherwise. 
\end{proof}

\begin{theorem} 
$\mathsf{REG}/n \subsetneq \mathsf{SPACE(1)}/n(\mathtt{1w\mbox{-}input,rt\mbox{-}advice})$.
\label{THM_EQUAL_IN_1W_RT_LINEAR}
\end{theorem}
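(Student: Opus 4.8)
The plan is to establish both halves of the statement. For the inclusion $\mathsf{REG}/n \subseteq \mathsf{SPACE(1)}/n(\mathtt{1w\mbox{-}input,rt\mbox{-}advice})$ I would argue exactly as at the start of the proof of Theorem~\ref{theorem:equal}: we have $\mathsf{REG}/n = \mathsf{SPACE(1)}/n(\mathtt{rt\mbox{-}input,rt\mbox{-}advice})$, and a real-time input head is a special case of a one-way input head, so relaxing the input head from real-time to one-way (while keeping the advice head real-time) can only enlarge the recognized class. For strictness I would reuse the witness language $\mathtt{EQUAL_2}=\{w\mid w\in\{a,b\}^*\text{ and }|w|_a=|w|_b\}$, which by \cite{TYL10} is not in $\mathsf{REG}/n$, and show that it lies in $\mathsf{SPACE(1)}/n(\mathtt{1w\mbox{-}input,rt\mbox{-}advice})$.

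The construction exploits the fact that here it is the input head, rather than the advice head, that may pause, so we can turn the real-time advice tape into a clock of carefully chosen length. For even $n$ set $h(n)=c^{3n/2}$; for odd $n$, which cannot belong to $\mathtt{EQUAL_2}$, let $h(n)$ be a single distinguished symbol on which the automaton halts and rejects immediately (the machine cannot detect the parity of $n$ itself, but the advice, whose content depends on $n$, can supply it). On an even-length input the automaton scans $w$ from left to right, spending exactly one step on each $b$ (advancing both heads) and exactly two steps on each $a$ (first step: keep the input head put and record the $a$ in the state; second step: advance the input head past it), while the real-time advice head moves one cell to the right at every step regardless. Thus scanning a word $w$ with $k$ occurrences of $a$ and $n-k$ occurrences of $b$ takes $2k+(n-k)=n+k$ steps, at which point the input head reaches its right endmarker, whereas the advice head reaches its right endmarker after exactly $3n/2$ steps. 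The automaton accepts iff it reads the two endmarkers on the same step, i.e.\ iff $n+k=3n/2$, i.e.\ iff $k=n/2$, which is precisely $|w|_a=|w|_b$; in the remaining cases one head hits its endmarker strictly before the other and the machine halts and rejects. Since $|h(n)|\in O(n)$, this puts $\mathtt{EQUAL_2}$ in the class and, together with $\mathtt{EQUAL_2}\notin\mathsf{REG}/n$, yields the proper inclusion.

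The only delicate point is the synchronization bookkeeping: one must check the off-by-one details so that the phrase ``reaches the right endmarker'' refers to the intended step for both heads, and one must arrange the transition function so that the machine halts with a reject decision in exactly the configurations where the advice head would otherwise be forced past its endmarker (the case $k>n/2$) and where the input head reaches its endmarker while $c$'s remain on the advice tape (the case $k<n/2$), including the degenerate case $w=\varepsilon$. I expect this to be the main obstacle, though it is entirely routine; everything else is immediate.
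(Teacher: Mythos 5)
Your proof is correct and uses essentially the same idea as the paper's: treat the real-time advice tape as a unary clock and have the one-way input head dwell a symbol-dependent number of steps, accepting iff both heads hit their endmarkers simultaneously. The paper merely picks a slightly different witness ($\mathtt{EQUAL}$ over $\{a,b,c\}$ with advice $a^{2n}$ and dwell times $1,3,2$) instead of your $\mathtt{EQUAL_2}$ with advice $c^{3n/2}$ and dwell times $2,1$, which is an inessential variation.
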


\begin{proof}
Consider the language $\mathtt{EQUAL} = \{w| w\in \{a,b,c\}^*$ where $|w|_a=|w|_b \}$, which is similar to $\mathtt{EQUAL_2}$, but with a bigger alphabet. $\mathtt{EQUAL} \notin \mathsf{REG}/n$, as can be shown easily by Yamakami's characterization theorem (Th. 2 of \cite{Ya10}) for this class.
We will describe a dfat $M$ with one-way input, and real-time access to an advice string that is just $a^{2n}$, where $n$ is the input length.

$M$ moves the advice head one step to the right for each $a$ that it scans in the input. When it scans a $b$, it advances the advice head by three steps and for each $c$, scanned on the input tape, the advice head is moved two steps. If the advice head attempts to move beyond the advice string, $M$ rejects. When the input tape head reaches the end of the tape, $M$ waits to see if the advice tape head will also have arrived at the end of the advice string after completing the moves indicated by the last input symbol. If this occurs, $M$ accepts, otherwise, it rejects.

Note that the advice head is required to move exactly $|w|_a+3|w|_b+2(n-|w|_a-|w|_b)$ steps, which equals $2n$ if and only if the input is a member of $\mathtt{EQUAL}$.
\end{proof}

Tadaki et al. \cite{TYL10} studied one-tape linear-time Turing machines with an advice track, and showed  that the class of languages that they can recognize coincides with $\mathsf{REG}/n$. Theorem \ref{theorem:equal} above lets us conclude that simply having a separate head for advice increases the computational power of a real-time dfa, whereas the incorporation of a single two-way head for accessing both advice and a linear amount of read/write memory simultaneously does not.

As noted earlier, advice lengths that are increasing functions of the input length are not useful in the advice prefix model. Only linear-sized advice has been studied in the context of the advice track model \cite{TYL10,Ya10}. Theorem \ref{theorem:akbmck} demonstrates a family of languages for which very small increasing advice length functions are useful in the advice tape model, but not in the advice track model.

\begin{theorem}\label{theorem:akbmck}
For every function $f:\mathbb{N}\rightarrow\mathbb{N}$ such that  $f(n) \in \omega(1) \cap O(\sqrt{n})$, $\mathsf{SPACE(1)}/f^2(n)(\mathtt{1w\mbox{-}input}) \nsubseteq \mathsf{REG}/n$.


\end{theorem}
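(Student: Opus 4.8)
The plan is to exhibit, for each such $f$, a single language $L_f$ that a dfat with a one-way input head recognizes using advice of length $O(f^{2}(n))$ --- in fact even with a \emph{real-time} input head and advice of length only $f(n)$, which is $O(f^{2}(n))$ because $f\in\omega(1)$ forces $f(n)\ge 1$ eventually --- but which does not lie in $\mathsf{REG}/n$. A natural choice is a ``scattered counting'' language such as $L_f=\{\,w\in\{a,\diamond\}^{*}\ :\ |w|_{a}=f(|w|)\,\}$, whose membership depends only on how a growing-but-sublinear quantity compares with a fixed, length-dependent target. The point of the padding symbol $\diamond$ is exactly that it lets the advice head stay put while the input head scans the irrelevant symbols, so that the advice actually consulted stays sublinear even though the input has length $n$.

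For the easy inclusion I would take the advice to be $h(n)=a^{f(n)}$ (padded out to length $f^{2}(n)$ if one wants the stated bound met on the nose) and use it as a one-way ``ruler'': keep the input head real-time; advance the advice head one cell for each $a$ read on the input; leave it in place on each $\diamond$; reject at once if the advice head is pushed past the advice end-marker; and, on seeing the input end-marker, accept iff the advice head has landed exactly on the advice end-marker. This machine accepts $w$ iff $|w|_{a}=f(|w|)$, so $L_f\in\mathsf{SPACE(1)}/f^{2}(n)(\mathtt{rt\mbox{-}input})\subseteq\mathsf{SPACE(1)}/f^{2}(n)(\mathtt{1w\mbox{-}input})$.

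For the separation I would invoke Yamakami's characterization of $\mathsf{REG}/n$ (Th.~2 of \cite{Ya10}) in the form: if $L\in\mathsf{REG}/n$ then there is a constant $q$ such that for every $n$ and every $m\le n$ the length-$n$ Myhill--Nerode relation --- $u\equiv u'$ iff $|u|=|u'|=m$ and $uv\in L\Leftrightarrow u'v\in L$ for all $v$ of length $n-m$ --- has at most $q$ classes among the strings of length $m$. It then suffices to fix a large $n$, set $m=\lceil n/2\rceil$ (so that $f(n)\le m$ and $f(n)\le n-m$, which holds eventually since $f\in\omega(1)\cap O(\sqrt n)\subseteq\omega(1)\cap o(n)$), and look at the $f(n)+1$ strings $u_{i}=a^{i}\diamond^{\,m-i}$ for $0\le i\le f(n)$: the suffix $v_{i}=a^{\,f(n)-i}\diamond^{\,(n-m)-(f(n)-i)}$ drives $u_{i}$ --- and no other $u_{j}$ --- into $L_f$, so the $u_{i}$ are pairwise inequivalent and the number of classes is at least $f(n)+1$, unbounded in $n$. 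Hence $L_f\notin\mathsf{REG}/n$.

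The main obstacle is picking $L_f$ so that both halves succeed simultaneously. Recognizability with a one-way input head and only $o(n)$ advice cells essentially dictates that membership reduce to comparing one input-determined quantity against a length-dependent target: a single one-way advice head cannot, for instance, verify an equality between two input-determined counts, since the relevant ``midpoint'' is not pre-markable in a string that depends only on $n$. The separation, on the other hand, requires that this target grow with $n$ so that the Myhill--Nerode index is unbounded. Once $L_f$ is fixed, what remains is routine: dealing with small $n$, and checking that $f\in O(\sqrt n)$ leaves room inside a length-$n$ input for the suffixes used above.
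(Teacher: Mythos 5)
Your proof is correct, and it follows the same overall strategy as the paper's: pick a witness language parameterized by $f$, give an explicit advice function and automaton for the upper bound, and rule out membership in $\mathsf{REG}/n$ via the constant-index consequence of Yamakami's characterization (Theorem 2 of \cite{Ya10}). The difference is the witness itself. The paper uses $\mathtt{L_f}=\{a^kb^mc^k \mid k\le f(n),\ n=2k+m\}$ with the advice string $\#\#a\#aa\#\cdots\#a^{f(n)}\#$ of length $\Theta(f^2(n))$, and its automaton genuinely exploits the one-way (pausable) input head, since the advice head must skip to the next $\#$ while the input head waits. Your language $\{w\in\{a,\diamond\}^{*} : |w|_{a}=f(|w|)\}$ is simpler and in fact yields a slightly stronger conclusion: it is recognized with a \emph{real-time} input head and advice of length only $f(n)$, so it already witnesses $\mathsf{SPACE(1)}/f(n)(\mathtt{rt\mbox{-}input})\nsubseteq\mathsf{REG}/n$, from which the stated claim follows because $f(n)\in O(f^{2}(n))$ for integer-valued $f\in\omega(1)$. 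Your lower-bound count of $f(n)+1$ pairwise $\equiv_{L_f,n,m}$-inequivalent prefixes $a^{i}\diamond^{m-i}$ is exactly the style of argument the paper runs elsewhere (its Lemma 1 and the fooling sets in Lemmas 2 and Theorem 7), and your check that $f(n)\le\min(m,n-m)$ for large $n$ is the only place the growth hypothesis on $f$ is used. The one thing your construction does not illustrate, and the paper's does, is a language that seems to \emph{need} advice length $\Theta(f^{2}(n))$ and a non-real-time input head; but the theorem as stated does not require that, so your proof stands.
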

\begin{proof}
Consider the language $\mathtt{L_f}=\{a^kb^mc^k|  k \leq f(n),   n = k + m + k\}$, for any function $f$ satisfying the properties in the theorem statement.

Theorem 2 of \cite{Ya10} can be used to show  $\mathtt{L_f} \notin \mathsf{REG}/n $.

One can construct a dfat with one way access to input and advice that recognizes $\mathtt{L_f}$ as follows. For inputs of length $n$, the advice string is of the form $\#\#a\#aa\#aaa\# \cdots \#a^{f(n)} \#$, with length $O(f^2(n))$. During any step, if the automaton detects that the input is not of the form $a^*b^*c^*$, it rejects the input. For each $a$ that it reads from the input tape, the automaton moves the advice tape head to the next $\#$ on the advice tape. (If the advice ends when looking for a $\#$, the input is rejected.) When the input tape head scans the $b$'s, the advice tape head remains idle. Finally, when the input head starts to scan the $c$'s, the automaton  compares the number of $c$'s on the input tape with the number of $a$'s that it can scan until the next $\#$ on the advice tape. If these  match, the input is accepted; otherwise it is rejected.
\end{proof}

When restricted to constant size advice, the parallelism and the two-way input access inherent in our model does not make it superior to the advice prefix model. As we show now, one can always read the entire advice before starting to read the input tape without loss of computational power in the constant-length advice case:

\begin{theorem}
For every  $k \in \mathbb{N}$, $\mathsf{SPACE(1)}/k=\mathsf{REG}/k$.
\end{theorem}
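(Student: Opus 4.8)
The plan is to prove the two inclusions separately; the key observation throughout is that an advice string of constant length carries only a bounded amount of information.

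The inclusion $\mathsf{REG}/k \subseteq \mathsf{SPACE(1)}/k$ follows from the definitions, as noted at the beginning of this section: given a real-time finite automaton $N$ using an advice prefix $g$ with $|g(n)|$ bounded by a constant, I would have a dfat first scan its entire advice tape (a constant number of one-way moves), store $g(n)$ in its finite control (legitimate since $g$ ranges over a fixed finite set of strings), compute the state $N$ would be in after reading $g(n)$, and then simulate $N$ on the input with a real-time input head.

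The substance is the reverse inclusion $\mathsf{SPACE(1)}/k \subseteq \mathsf{REG}/k$. Let $M$ be a dfat with a two-way input head and a one-way advice head recognizing a language $L$ with advice function $h$ where $|h(n)| \le K$ for a constant $K$, so that $h$ takes values in the finite set $W = \{w \in \Gamma^* : |w| \le K\}$. The first step is to note that, for each fixed $w \in W$, ``$M$ run with $w$ on its advice tape'' is simply a two-way deterministic finite automaton $M_w$ over the input alphabet $\Sigma$: the advice contents are frozen, and the one-way advice head can only be in one of the $|w|+2$ positions between the endmarkers, so the entire advice subsystem folds into the finite-state control. The second step is to invoke the classical equivalence of two-way and one-way (hence real-time) deterministic finite automata to replace each $M_w$ by an equivalent real-time automaton $M'_w$; there are finitely many of these since $W$ is finite. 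The third step is to assemble a single real-time automaton $M'$ that contains all the $M'_w$, uses advice prefix $h'(n)$ equal to $h(n)$ padded to the fixed length $K$ with a fresh symbol, reads this prefix to recover $w = h(n)$ and hence decide which sub-automaton to use, and then runs $M'_w$ on the actual input. Correctness is then immediate: for an input $x$ of length $n$, $M_{h(n)}$ agrees with ``$M$ on advice $h(n)$'' on inputs of length $n$, $M'_{h(n)}$ agrees with $M_{h(n)}$ on all inputs, and $M'$ with prefix $h'(n)$ behaves on the input portion exactly like $M'_{h(n)}$, so $M'$ accepts $x$ with prefix $h'(n)$ if and only if $x \in L$.

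I expect the one delicate point to be the first step of the reverse inclusion --- verifying carefully that a one-way head over a constant-length tape contributes only finite-state information, including that it simply idles once it reaches the right endmarker and that non-halting computations of $M_w$ are handled correctly by the two-way-to-one-way conversion (they yield rejection, as usual). The remaining ingredients --- delimiting the padded advice prefix and merging finitely many automata into one --- are entirely routine.
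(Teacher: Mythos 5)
Your proof is correct and follows essentially the same route as the paper's: hard-wire each of the finitely many possible constant-length advice strings into the dfat to obtain a family of two-way DFAs, convert each to an equivalent real-time automaton via the classical two-way-to-one-way equivalence, and use the advice prefix merely to select which sub-automaton to run. Your write-up is somewhat more careful about the folding of the one-way advice head into the finite control and about padding the prefix to a fixed length, but these are details the paper's argument implicitly covers.
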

\begin{proof}
The relation $\mathsf{REG}/k\subseteq \mathsf{SPACE(1)}/k$ is trivial, since an automaton taking constant-length advice in the prefix or track formats can  be converted easily to one that reads it from a separate tape. For the other direction, note that a dfat $M$ with two-way input that uses $k$ bits of advice corresponds to a set S of $2^k$ real-time dfa's \textit{without} advice, each of which can be obtained by hard-wiring a different advice string to the program of $M$, and converting the resulting two-way dfa to the equivalent real-time machine, which exists by \cite{Sh59}. The advice string's job is just to specify which of these machines will run on the input string. It is  then easy to build a dfa with advice prefix which uses the advice to select the appropriate program to run on the input. 
\end{proof}

Since our model is equivalent to the advice prefix model for constant-length advice, we inherit the results like Theorem 5 of \cite{DH95}, which states that the longer advice strings one allows, the larger the class of languages we can recognize will be, as long as one makes sure that the advice and input alphabets are identical. 


For any language $L$ on an alphabet $\Sigma$, and for natural numbers $n$ and $k$ such that $k\le n$, we define the relation $\equiv_{L,n,k}$ on the set $\Sigma^k$ as follows:
$x \equiv_{L,n,k} y   \iff   \mbox{for all strings } z \mbox{ of length }n-k, xz
\in L\mbox{ if and only if }yz \in L.$
In the remainder of the paper, we will make frequent use the following lemma, which is reminiscent of Yamakami's characterization theorem for $\mathsf{REG}/n$ \cite{Ya10}, to demonstrate languages which are unrecognizable with certain amounts of advice by automata with one-way input.

\begin{lemma}\label{lemma:1waychar}
For any advice length function $f$, if $L \in \mathsf{SPACE(1)}/f(n)(\mathtt{1w\mbox{-}input})$, then  for all $n$ and all $k\le n$,  $\equiv_{L,n,k}$ has  $O(f(n))$ equivalence classes.
\end{lemma}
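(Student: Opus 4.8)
The plan is a Myhill--Nerode-style argument in which the only unbounded component of the machine's memory is the position of the advice head. Fix a dfat $M$ with one-way input head, with state set $Q$, recognizing $L$ with an advice function $h$ satisfying $|h(n)|\in O(f(n))$, and fix $n$ and $k\le n$. For $x\in\Sigma^k$, consider the computation of $M$ with $h(n)$ on the advice tape and $x$ occupying the first $k$ cells of the input tape. Since the input head never moves left, the symbols it scans while it sits on cells $1,\dots,k$ are exactly those of $x$; hence the initial segment of the computation, up to the first instant (if any) at which the input head steps onto cell $k+1$, is determined by $x$ alone (and by $h(n)$, which is fixed). Exactly one of the following occurs: (a) $M$ halts with its input head still on cells $1,\dots,k$; (b) $M$ runs forever without the input head ever leaving cells $1,\dots,k$; or (c) the input head eventually steps onto cell $k+1$, at which moment $M$ is in a \emph{crossing configuration} $c=(q,p)$, where $q\in Q$ is the internal state and $p$ is the position of the advice head.

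I would then define a map $\phi:\Sigma^k\to S$, where $S=\{\mathtt{acc},\mathtt{rej}\}\cup\bigl(Q\times\{0,1,\dots,|h(n)|+1\}\bigr)$, by sending $x$ to $\mathtt{acc}$ or $\mathtt{rej}$ according to $M$'s decision in case (a), to $\mathtt{rej}$ in case (b) (a computation that never halts witnesses non-membership, by the definition of recognition), and to its crossing configuration in case (c). Since the advice alphabet is fixed and no head is ever moved past an end-marker, $p$ ranges over $O(f(n))$ values, so $|S|=O(1)+|Q|\cdot O(f(n))=O(f(n))$.

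The crux is to show that $\phi(x)=\phi(y)$ implies $x\equiv_{L,n,k}y$. If $\phi(x)=\phi(y)=\mathtt{acc}$, then for every $z$ of length $n-k$, $M$ accepts both $xz$ and $yz$ already during their (identical) initial segments, so $xz,yz\in L$; the case $\mathtt{rej}$ is symmetric (in case (b), $M$ fails to halt, hence rejects, for every $z$). If $\phi(x)=\phi(y)=(q,p)$, then for every such $z$ the computations of $M$ on $xz$ and on $yz$ \emph{from the crossing moment onward} coincide: they begin in state $q$, with the advice head at position $p$ of the same string $h(n)$, and the input head at cell $k+1$, and --- precisely because the input head is one-way --- it never revisits cells $1,\dots,k$, so it scans exactly $z$ followed by the end-marker in both runs. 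Hence $M$ accepts $xz$ iff it accepts $yz$, i.e.\ $xz\in L\iff yz\in L$. Therefore $\equiv_{L,n,k}$ is coarser than the partition of $\Sigma^k$ into the fibers of $\phi$, and so has at most $|S|=O(f(n))$ classes.

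I expect the only delicate points to be the correct enumeration of cases (a)--(c) --- in particular verifying that ``$M$ never crosses into $z$'' genuinely forces membership of $xz$ to be independent of $z$ --- and the observation that the whole argument hinges on the input head being one-way: with a two-way input head the post-crossing computation could re-enter the prefix, so the crossing configuration would no longer summarize everything about $x$ that affects membership of $xz$. The advice head needs no special treatment, since it is one-way by definition of the model and, in any event, only its position, not its direction of travel, enters the argument.
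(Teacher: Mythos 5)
Your proof is correct and follows essentially the same route as the paper's: bound the number of (internal state, advice-head position) configurations available at the moment the one-way input head leaves the length-$k$ prefix, and observe that two prefixes inducing the same configuration are $\equiv_{L,n,k}$-equivalent. Your version is somewhat more careful than the paper's, since you explicitly handle the prefixes on which the machine halts or loops before ever crossing into cell $k+1$ (using the fact that a non-halting run counts as rejection under the paper's definition of recognition), a case the paper's argument passes over silently.
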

\begin{proof}
Let $M$ be the dfat which is supposed to recognize $L$ with an advice string of length $O(f(n))$. If we fix the position of the input head, there are just  $O(f(n))$ combinations of internal state and advice head position pairs that are potentially reachable for $M$. Assume that the number of equivalence classes of $\equiv_{L,n,k}$ is not $O(f(n))$. Then for some sufficiently large $n$, there exists two strings $x$ and $y$ of length $k$ in two different equivalence classes of $\equiv_{L,n,k}$ which cause $M$ to reach precisely the same head positions and internal state after being processed if they are presented as the prefixes of two $n$-symbol input strings in two separate executions of $M$. But $M$ will then have to give the same response to the two input strings $xz$ and $yz$ for any $z \in \Sigma^{n-k}$, meaning that $x \equiv_{L,n,k} y$. 
\end{proof}


We can now establish the existence of an infinite hierarchy of language classes that can be recognized by dfat's with increasing amounts of advice.

\begin{theorem}
For $  k \in \mathbb{Z^+} $, 
$  \mathsf{SPACE(1)}/n^{k}(\mathtt{1w\mbox{-}input})  \subsetneq \mathsf{SPACE(1)}/n^{k+1}(\mathtt{1w\mbox{-}input} ) $.
\end{theorem}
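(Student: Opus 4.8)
The containment $\mathsf{SPACE(1)}/n^{k}(\mathtt{1w\mbox{-}input})\subseteq\mathsf{SPACE(1)}/n^{k+1}(\mathtt{1w\mbox{-}input})$ is immediate, since $n^{k}\in O(n^{k+1})$ and an automaton can always ignore the extra available advice. To make the inclusion strict, the plan is to build, for each $k\in\mathbb{Z}^{+}$, a language $L_k$ that is recognized by a dfat with one-way input using $O(n^{k+1})$ advice, while Lemma~\ref{lemma:1waychar} certifies that no such automaton with only $O(n^{k})$ advice can recognize it. Following the navigation idea behind Theorem~\ref{theorem:akbmck}, the advice will hold a large lookup array whose cells the input head reaches by driving the advice head; the extra work here is tuning the cell size and the array dimensions so that the array still fits in $O(n^{k+1})$ symbols yet forces more than $O(n^{k})$ Myhill--Nerode-type classes.

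Fix $M=M_n=\Theta\!\left(n^{(k+1)/(k+2)}\right)$ and $m=m_n=(k{+}1)(M_n{+}1)$. Call a string \emph{well formed} if it has the shape $a^{x_1}d^{y_1}\#\,a^{x_2}d^{y_2}\#\cdots\#\,a^{x_{k+1}}d^{y_{k+1}}\#\,g^{z}\,e^{r}\,\delta^{t}$ with $0\le x_i\le M_n$, $y_i\ge 0$, $\sum_i(x_i+y_i)=(k{+}1)M_n$ (so the part up to and including the last $\#$ has the fixed length $m_n$), $1\le z\le k{+}1$, and $r,t\ge 0$ chosen so that the whole string has length exactly $n$. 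Let $L_k$ be the set of well-formed strings of length $n$ with $r=x_z$. The $d$-padding is the crucial device: it lets every tuple $\mathbf{x}\in\{0,\dots,M_n\}^{k+1}$ be the content of some length-$m_n$ prefix, so the set of such prefixes realizes a full $(k{+}1)$-dimensional grid of size $(M_n{+}1)^{k+1}$ rather than a lower-dimensional simplex.

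For membership in the larger class, let $h(n)$ list, in lexicographic order of $\mathbf{x}\in\{0,\dots,M_n\}^{k+1}$, one block per tuple, where the block for $\mathbf{x}$ records $x_1,\dots,x_{k+1}$ in unary with internal markers, and where distinct separator symbols for $k{+}1$ nested levels delimit the groups of blocks sharing $x_1$, then $x_1x_2$, and so on. On an input, the dfat checks the well-formedness shape with its finite control; while reading $a^{x_i}$ it advances the advice head past $x_i$ separators of the $i$-th level (rejecting if it runs out or overshoots the surrounding group), ignores each $d^{y_i}$-block, and after the last $\#$ the advice head rests at the block for $\mathbf{x}$; it then reads $z$ into its state, skips the first $z{-}1$ recorded coordinates inside that block, and matches the unary record of $x_z$ against $e^{r}$, accepting iff they coincide. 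Every head motion is to the right, so one-way input and one-way advice suffice. There are $(M_n{+}1)^{k+1}$ blocks, each of length $\Theta((k{+}1)M_n)=\Theta(M_n)$, so $|h(n)|=\Theta\!\left(M_n^{\,k+2}\right)=\Theta(n^{k+1})$; moreover $m_n=\Theta(M_n)=o(n)$ and the part after the last $\#$ has length $\Theta(n)\gg M_n$, so the format is realizable and $L_k\in\mathsf{SPACE(1)}/n^{k+1}(\mathtt{1w\mbox{-}input})$.

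For non-membership, apply Lemma~\ref{lemma:1waychar} at split length $m_n$. Given distinct tuples $\mathbf{x}\neq\mathbf{x}'$ with, say, $x_z\neq x'_z$, the continuation $g^{z}\,e^{x_z}\,\delta^{t}$ of the appropriate length drives the $\mathbf{x}$-prefix into $L_k$ and the $\mathbf{x}'$-prefix out of it, so $\equiv_{L_k,n,m_n}$ has at least $(M_n{+}1)^{k+1}=\Theta\!\left(n^{(k+1)^2/(k+2)}\right)$ classes. Because $(k{+}1)^2/(k{+}2)=k+\tfrac{1}{k+2}>k$, this count is not $O(n^{k})$, so by the lemma $L_k\notin\mathsf{SPACE(1)}/n^{k}(\mathtt{1w\mbox{-}input})$, which finishes the separation. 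The one genuinely delicate step is the balancing of parameters in this last estimate: each advice cell must be linear in $M_n$ (to host the unary comparison that makes \emph{every} coordinate of $\mathbf{x}$ matter), which blows the array up to $\Theta(M_n^{k+2})$, so $M_n$ must be pushed all the way up to $n^{(k+1)/(k+2)}$ to stay within the $O(n^{k+1})$ advice budget, and one must then verify that the resulting number $n^{(k+1)^2/(k+2)}$ of equivalence classes still strictly dominates $n^{k}$.
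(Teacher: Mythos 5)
Your argument is sound in its essentials but takes a genuinely different route from the paper. The paper separates the levels with the single family $\mathtt{L_{k+1}}$ of nested palindromic blocks $c_{k+1}^{n_{k+1}}\cdots c_1^{n_1}c_0^{n_0}c_1^{n_1}\cdots c_{k+1}^{n_{k+1}}$: its prefixes of length $\lfloor n/2\rfloor+1$ already yield $\Theta(n^{k+1})$ classes of the relation in Lemma~\ref{lemma:1waychar}, which rules out $O(n^{k})$ advice, and the matching upper bound is an inductive advice construction in which $h_{i+1}(n)$ concatenates the strings $h_i(n-2j)c_{i+1}^j\#_{i+1}$, so that recognizing $\mathtt{L_{i+1}}$ reduces to selecting the correct copy of the level-$i$ advice and then running the level-$i$ machine on it. You instead build one flat lookup table indexed by a $(k+1)$-dimensional grid of side $M_n=\Theta\big(n^{(k+1)/(k+2)}\big)$, using $d$-padding so that every tuple is realized by a prefix of the common length $m_n$, and you balance cell size against grid size so that the table fits in $\Theta(n^{k+1})$ while the number of pairwise-inequivalent prefixes, $\Theta\big(n^{k+1/(k+2)}\big)$, still fails to be $O(n^{k})$. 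Both work, and your nested-separator navigation with a one-way advice head is the same device as in Theorem~\ref{theorem:akbmck}; the paper's witness is tight at each level (class count $\Theta(n^{k+1})$ matching the advice budget), whereas yours sits strictly between levels, which suffices for this theorem at the cost of the fractional-exponent bookkeeping.

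One slip needs repair. You include the condition $\sum_i(x_i+y_i)=(k+1)M_n$ in the definition of well-formedness, but the automaton you describe never verifies it: the finite control cannot count to $m_n$, and the advice head is occupied navigating the array, so the machine also accepts strings of the right shape with $x_i\le M_n$ and $r=x_z$ whose prefix length is not $m_n$. As written it therefore recognizes a strictly larger language than your $L_k$, and the upper-bound half of the argument does not apply to $L_k$ itself. The fix is simply to drop that clause from the definition (keep the shape, the bounds $x_i\le M_n$ --- which the navigation does enforce --- the constraint $1\le z\le k+1$, and $r=x_z$): the automaton then recognizes exactly the redefined language, and the lower bound is unaffected because you may still choose one representative prefix of length exactly $m_n$ for each tuple, and the same continuations $g^{z}e^{x_z}\delta^{t}$ separate them.
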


\begin{proof}
To prove the theorem statement, we will first define a family $\mathtt{L_k}$ of languages for $k \in \mathbb{Z^+}$, and then show  that advice strings of length $\Theta(n^i)$ are necessary (Lemma \ref{lemma:Li_Not_in_Ni-1}) and sufficient  (Lemma \ref{lemma:Li_in_Ni}) to recognize any particular member $\mathtt{L_i}$ of this family. 

\end{proof}

\begin{definition}
$\mbox{For}\  k \in \mathbb{Z^+}, \ \mathtt{L_k}= \{ c_k^{n_k} c_{k-1}^{n_{k-1}}\cdots c_1^{n_1} c_0^{n_0} c_1^{n_1}\cdots c_{k-1}^{n_{k-1}} c_k^{n_k} |  n_0 > 0 \ \mbox{and} \  n_j \ge 0 \ \mbox{for} \ j \in \{ 1,\ldots,k\}\}$ on the $k+1$-symbol alphabet $\{c_0, c_1,\ldots, c_k\}$. 
\end{definition}

\begin{lemma} \label{lemma:Li_Not_in_Ni-1}
$\mbox{For}\  i  \in \mathbb{Z^+},  \ \mathtt{L_i}\notin \mathsf{SPACE(1)}/n^{i-1}(\mathtt{1w\mbox{-}input})$
\end {lemma}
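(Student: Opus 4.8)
The plan is to apply Lemma~\ref{lemma:1waychar} contrapositively: to conclude $\mathtt{L_i}\notin\mathsf{SPACE(1)}/n^{i-1}(\mathtt{1w\mbox{-}input})$ it is enough to produce, for infinitely many lengths $n$, a cut point $k\le n$ for which $\equiv_{\mathtt{L_i},n,k}$ has $\omega(n^{i-1})$ equivalence classes, since otherwise the lemma (with $f(n)=n^{i-1}$) would force only $O(n^{i-1})$ classes. For each large $n$ I would take $k=\lceil n/2\rceil$ and, for every tuple $\vec m=(m_1,\dots,m_i)$ of positive integers with $s:=m_1+\cdots+m_i\le\lfloor n/4\rfloor$, consider the length-$k$ string
\[
x_{\vec m}=c_i^{m_i}c_{i-1}^{m_{i-1}}\cdots c_1^{m_1}c_0^{\,k-s}.
\]
The bound $s\le\lfloor n/4\rfloor$ keeps the final $c_0$-block positive, so $x_{\vec m}$ is a legitimate prefix that records $\vec m$ in its block-exponent pattern.

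Next I would distinguish these prefixes pairwise using, for each $\vec m$, the single suffix
\[
z_{\vec m}=c_0^{\,n-k-s}\,c_1^{m_1}c_2^{m_2}\cdots c_i^{m_i}
\]
of length $n-k$. Since $x_{\vec m}z_{\vec m}=c_i^{m_i}\cdots c_1^{m_1}c_0^{\,n-2s}c_1^{m_1}\cdots c_i^{m_i}$ and $n-2s>0$ (because $s\le\lfloor n/4\rfloor<n/2$), we have $x_{\vec m}z_{\vec m}\in\mathtt{L_i}$. On the other hand, for $\vec m'\neq\vec m$ the string $x_{\vec m'}z_{\vec m}$ has the shape $c_i^{m_i'}\cdots c_1^{m_1'}c_0^{(\cdot)}c_1^{m_1}\cdots c_i^{m_i}$, and since every exponent involved is positive its parse into $c_j$-blocks is forced; membership in $\mathtt{L_i}$ would then require $m_j'=m_j$ for all $j\in\{1,\dots,i\}$, contradicting $\vec m'\neq\vec m$. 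Hence $x_{\vec m}$ and $x_{\vec m'}$ lie in different classes of $\equiv_{\mathtt{L_i},n,k}$ whenever $\vec m\neq\vec m'$.

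Finally, a hockey-stick count shows the number of eligible tuples is $\sum_{s=i}^{\lfloor n/4\rfloor}\binom{s-1}{i-1}=\binom{\lfloor n/4\rfloor}{i}=\Theta(n^i)$, so $\equiv_{\mathtt{L_i},n,k}$ has $\Theta(n^i)$ classes, which is not $O(n^{i-1})$; combined with Lemma~\ref{lemma:1waychar} this yields the claimed non-membership. I expect the one genuinely delicate step to be the distinguishing argument --- verifying that no ``accidental'' re-parsing lets $x_{\vec m'}z_{\vec m}$ sneak back into $\mathtt{L_i}$, and that all block lengths (including the middle exponent $n_0=n-2s$) stay in the valid ranges. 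The deliberately loose choices $k=\lceil n/2\rceil$ and $s\le\lfloor n/4\rfloor$ are meant to give enough slack that these checks reduce to elementary inequalities.
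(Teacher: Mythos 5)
Your proof is correct and follows essentially the same route as the paper's: cut the string near the middle, exhibit $\Theta(n^i)$ pairwise $\equiv_{\mathtt{L_i},n,k}$-inequivalent prefixes of the form $c_i^{m_i}\cdots c_1^{m_1}c_0^{+}$, and invoke Lemma~\ref{lemma:1waychar}. You merely spell out the distinguishing suffixes, the forced-parse check, and the hockey-stick count explicitly, and restrict to strictly positive block lengths, which costs nothing asymptotically.
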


\begin{proof}
For a positive integer $n$, consider the set $S$ of strings of length $k =\floor{n/2}+1$ and of the form  $c_i^{*} c_{i-1}^{*} \cdots c_1^{*} c_0^{+}$. Note that each member of $S$ is the first half of a different member of $\mathtt{L_i}$, no two distinct members $x$ and $y$ of $S$ satisfy $x \equiv_{\mathtt{L_i},n,k} y$, and that there are $\Theta(n^i)$ members of $S$. We conclude using Lemma \ref{lemma:1waychar} that $\mathtt{L_i}\notin \mathsf{SPACE(1)}/n^{i-1}(\mathtt{1w\mbox{-}input})$.

\end{proof}

\begin{lemma} \label{lemma:Li_in_Ni}
	For $ i \in \mathbb{Z^+} $, $ \mathtt{L_i} \in \mathsf{SPACE(1)}/n^{i}(\mathtt{1w\mbox{-}input})$.
\end {lemma}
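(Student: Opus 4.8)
The plan is to prove the statement by induction on $i$, with a direct construction for the base case $i=1$ and a reduction from $\mathtt{L_i}$ to $\mathtt{L_{i-1}}$ for the inductive step. For $i=1$, on input of the form $c_1^{a}c_0^{b}c_1^{d}$ (rejecting at once, via the finite control, if the input is not of the form $c_1^*c_0^+c_1^*$), I would let the advice be a block of $n$ dummy symbols, and have the automaton advance its advice head by $2$ positions for each $c_1$ read in the first block, by $1$ position for each $c_0$, and by $0$ positions for each $c_1$ in the second block. Since $a+b+d=n$, the head advances exactly $2a+b=n+(a-d)$ positions, so it lands precisely on the right endmarker just as the input is exhausted if and only if $a=d$ (if the head would run off the advice, reject; if the input ends with the head short of the endmarker, reject). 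This recognizes $\mathtt{L_1}$ with $O(n)$ advice.

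For the inductive step, assume $\mathtt{L_{i-1}}$ (over the alphabet $\{c_0,\dots,c_{i-1}\}$) is recognized by a dfat $M_{i-1}$ with one-way input and one-way advice, using an advice function $h_{i-1}$ of length $O(n^{i-1})$. The key observation is that a string is in $\mathtt{L_i}$ exactly when it has the form $c_i^{\,n_i}\, w\, c_i^{\,n_i}$ with $w\in\mathtt{L_{i-1}}$, i.e.\ $w$ is precisely the $c_i$-free part sandwiched between the two outer $c_i$-blocks. I would take $h_i(n)$ to be a ``menu'' $S_0S_1\cdots S_{\lfloor n/2\rfloor}$ of slots, where $S_v = h_{i-1}(n-2v)\,\diamond\,c_i^v\,\$$, with $\diamond$ a fresh symbol (not in $M_{i-1}$'s advice alphabet) serving as $M_{i-1}$'s advice endmarker and $\$$ a fresh slot separator. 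Its length is $\sum_{v=0}^{\lfloor n/2\rfloor}\bigl(|h_{i-1}(n-2v)|+v+2\bigr)$, which is at most $O(n)\cdot O(n^{i-1})+O(n^2)=O(n^i)$ for $i\ge 2$ (this is why the recursion cannot produce the base case, which genuinely needs the direct construction above).

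The machine $M_i$ runs in three phases. First, while reading the leading $c_i$-block it pauses the input head after each $c_i$ and moves the advice head to just past the next $\$$; after $n_i$ copies of $c_i$ the advice head sits at the start of slot $S_{n_i}$, i.e.\ at the beginning of $h_{i-1}(n-2n_i)$ (reject if the advice is exhausted first). Second, it simulates $M_{i-1}$ on the maximal $c_i$-free block $w$ that follows, reading $w$ one-way, feeding $M_{i-1}$ its advice endmarker when the advice head reads $\diamond$ and its input endmarker when the input head reads a $c_i$ or the true endmarker; it rejects if $M_{i-1}$ rejects. This correctly tests $w\in\mathtt{L_{i-1}}$, because $M_{i-1}$'s advice is exactly $h_{i-1}(|w|)$ and, by the endmarker convention, $M_{i-1}$ never moves its advice head past $\diamond$, so the head stays inside the $h_{i-1}(n-2n_i)\,\diamond$ portion of $S_{n_i}$. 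Third, if $M_{i-1}$ accepted, $M_i$ advances the advice head to $\diamond$ and one step beyond, placing it at the start of the stored block $c_i^{n_i}$ in $S_{n_i}$; it then reads the trailing input $c_i$'s, advancing the advice head one position per symbol and matching, accepting iff the input endmarker is reached exactly when the advice head reaches the $\$$ closing $S_{n_i}$. Routine structural checks in the finite control (input of the form $c_i^*c_{i-1}^*\cdots c_0^+\cdots c_i^*$, nonemptiness of the $c_0$-block, which is also enforced by $M_{i-1}$) cover the remaining cases, including the small-parameter ones ($n_i=0$, $w=\varepsilon$, and so on).

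I expect the main obstacle to be exactly the clean composition with $M_{i-1}$: ensuring that when $M_{i-1}$ halts the advice head is always somewhere within the $h_{i-1}(n-2n_i)\,\diamond$ region of the active slot, so that $M_i$ can unambiguously relocate it to the stored $c_i^{n_i}$ block and then verify the trailing block length. This is precisely what the fresh symbol $\diamond$ buys (it is $M_{i-1}$'s endmarker, never crossed, and never appears inside any $h_{i-1}$), so the argument should go through; the remaining work is the bookkeeping of the three-phase finite control and checking the edge cases.
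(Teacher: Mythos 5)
Your proposal is correct and follows essentially the same route as the paper: a direct counting construction for $\mathtt{L_1}$ (the paper advances the advice head $0$, $1$, and $2$ steps on the two $c_1$-blocks and the $c_0$-block rather than $2$, $1$, $0$, which is equivalent), and an inductive step whose advice is exactly the paper's menu of blocks $h_{i-1}(n-2v)$ followed by $c_i^v$ and a separator, consumed by the same three-phase slot-selection, simulation, and tail-matching machine. The only cosmetic difference is your extra delimiter $\diamond$; the paper instead lets $M_{i-1}$'s simulated advice end at the first $c_i$ or separator in the slot.
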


\begin{proof}
An inductive argument will be employed to show the truth of the statement, so let us first consider the language $\mathtt{L_1}$. To see that $\mathtt{L_1} \in \mathsf{SPACE(1)}/n^{1}(\mathtt{1w\mbox{-}input})$, we construct an advice function $h_1(n)$ and an automaton $M_1$ as follows. For inputs of length $n$, let $h_1(n) = 1^{n}$ be given as advice. The automaton $M_1$ checks if the input is of the form $ c_1^{i} c_0^{j} c_1^{k}$ for $i,k\ge 0$ and $j>0$. If not, it rejects. In parallel, $M_1$ moves the advice tape head while scanning the input as follows: For each  $c_1$ that comes before the first $c_0$ in the input, the advice tape head stays put. For each  $c_0$ in the input, the advice tape head moves one step to the right. Finally, for each  $c_1$ that comes after the last $c_0$ in the input, the advice tape head moves two steps to the right. The input is accepted if the endmarkers are scanned simultaneously on both tapes. Since the advice head moves exactly $j+2k$ steps, which equals $n=i+j+k$ if and only if $i=j$, we conclude that $M_1$ recognizes  $\mathtt{L_1}$ when provided with $h_1(n)$, a linear-length advice function.

Now let us prove that $\mathtt{L_i} \in \mathsf{SPACE(1)}/n^{i}(\mathtt{1w\mbox{-}input}) \implies \mathtt{L_{i+1}} \in \mathsf{SPACE(1)}/n^{i+1}(\mathtt{1w\mbox{-}input})$.

Assume $\mathtt{L_i} \in \mathsf{SPACE(1)}/n^{i}(\mathtt{1w\mbox{-}input})$. Then there should be a dfat $M_i$ which recognizes $\mathtt{L_{i}}$ when it has access to the advice function $h_i(n)$ of length $O(n^{i})$. Below, we construct a dfat $M_{i+1}$ and an advice function $h_{i+1}(n)$ of length $O(n^{i+1})$ such that  $M_{i+1}$ recognizes $\mathtt{L_{i+1}}$ when it has access to advice determined by function $h_{i+1}(n)$.

Note that the members of $\mathtt{L_{i+1}}$ are members of $\mathtt{L_{i}}$ sandwiched between equal numbers of $c_{i+1}$'s on each end. Therefore, the method for checking membership in $\mathtt{L_{i}}$  can be used in the test for   membership in $\mathtt{L_{i+1}}$ if one can  check whether the $c_{i+1}$ sequences at each end are of the same length separately. Based on this idea, we define the advice function $h_{i+1}(n)$ for  $\mathtt{L_{i+1}}$  in terms of the advice function $h_{i}(n)$ for  $\mathtt{L_{i}}$ as 
\[h_{i+1}(n)=  h_i(n)  \#_{i+1} h_i(n-2) c_{i+1} \#_{i+1} \cdots  \#_{i+1} h_i(n - 2\floor{\frac{n}{2}} )c_{i+1}^{\floor{\frac{n}{2}}}\#_{i+1},\]
that is, one concatenates all the strings $h_i(n-2j)c_{i+1}^j\#_{i+1}$ for $j \in \{0,\ldots,\floor{\frac{n}{2}}\}$ in increasing order, where $\#_{i+1}$ is a new symbol in $M_{i+1}$'s advice alphabet.
As $h_i(n)$ is of length $O(n^{i})$, the length of $h_{i+1}(n)$ can be verified to be  $O(n^{i+1})$.

When provided access to the advice function $h_{i+1}(n)$, the automaton $M_{i+1}$ performs the tasks below in parallel in order to recognize the language $\mathtt{L_{i+1}}$.

\begin{itemize}

\item The input is checked to be of the form $ c_{i+1}^{*} c_i^{*} \cdots c_1^{*} \ c_0^{+}\  c_1^{*} \cdots  c_i^{*} c_{i+1}^{*}$. If not, it is rejected.

\item For each $c_{i+1}$ on the input tape, that comes before any other symbol, the advice head is moved to the next $\#_{i+1}$ on the advice tape. If the endmarker is scanned on the advice tape at this step, the input is rejected. When the first non-$c_{i+1}$ symbol is scanned on the input, the control passes to the automaton $M_i$ for language $L_i$, which runs on the input tape content until the first $c_{i+1}$ or the endmarker, and uses as advice the content until the first $c_{i+1}$ or $\#_{i+1}$ on the advice tape. If  $M_i$ rejects its input, so does $M_{i+1}$. If $M_i$ accepts its input, $M_{i+1}$ accepts its input only if the number of $c_{i+1}$'s on the remainder of the input tape matches the number of  $c_{i+1}$'s on the advice tape until the first $\#_{i+1}$.

\end{itemize}

\end{proof}

We now show that  $\mathtt{PAL}$, the language of even-length palindromes on the alphabet $\{a,b\}$, is unrecognizable by dfat's with one-way input and polynomial-length advice:

\begin{theorem} \label{theorem:palnotin1w1w}
$\mathtt{PAL} \notin \mathsf{SPACE(1)}/poly(\mathtt{1w\mbox{-}input})$.
\end{theorem}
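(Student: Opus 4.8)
The plan is to invoke the contrapositive of Lemma~\ref{lemma:1waychar}: in order to show that $\mathtt{PAL}$ lies in no class $\mathsf{SPACE(1)}/f(n)(\mathtt{1w\mbox{-}input})$ with $f$ polynomial, it suffices to exhibit, for infinitely many lengths $n$, some $k\le n$ for which the relation $\equiv_{\mathtt{PAL},n,k}$ has more than $p(n)$ equivalence classes for every polynomial $p$. In fact I will produce exponentially many classes, so the argument rules out not merely polynomial but any sub-exponential advice bound.

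First I would restrict attention to even lengths $n=2m$ and choose $k=m=n/2$, so that the ``prefix'' strings range over $\{a,b\}^{m}$ while the completion strings $z$ also have length $n-k=m$. For $x\in\{a,b\}^{m}$ write $x^{R}$ for its reverse. Given two distinct strings $x,y\in\{a,b\}^{m}$, take $z=x^{R}$: then $xz=xx^{R}$ is an even-length palindrome of length $n$, hence $xz\in\mathtt{PAL}$, whereas $yz=yx^{R}\in\mathtt{PAL}$ would force $y=(x^{R})^{R}=x$, contradicting $x\neq y$. So no two distinct strings of $\{a,b\}^{m}$ are $\equiv_{\mathtt{PAL},n,m}$-equivalent, and therefore $\equiv_{\mathtt{PAL},n,n/2}$ has exactly $2^{n/2}$ equivalence classes.

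Since $2^{n/2}$ is not $O(f(n))$ for any polynomial $f$ (indeed not for any $f$ with $f(n)\in 2^{o(n)}$), the conclusion of Lemma~\ref{lemma:1waychar} fails for $\mathtt{PAL}$ with every such $f$, and in particular with every polynomial $f$; hence $\mathtt{PAL}\notin\mathsf{SPACE(1)}/poly(\mathtt{1w\mbox{-}input})$. I do not expect any genuine obstacle here: the only point needing a moment's care is that Lemma~\ref{lemma:1waychar} quantifies over all $k\le n$, which is precisely what permits the convenient choice $k=n/2$, and that using only even $n$ (still an infinite family of lengths) already contradicts the asymptotic advice bound.
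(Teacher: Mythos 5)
Your argument is exactly the paper's: both apply Lemma~\ref{lemma:1waychar} with $k=n/2$ to the set of all length-$n/2$ prefixes, noting that distinct prefixes are inequivalent under $\equiv_{\mathtt{PAL},n,n/2}$ (you spell out the witness $z=x^{R}$, which the paper leaves implicit) and that $2^{n/2}$ classes exceed any polynomial bound. The proposal is correct and matches the paper's proof.
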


\begin{proof}
Similarly to the proof of Lemma \ref{lemma:Li_Not_in_Ni-1}, we  consider the set $S$ of all strings on $\{a,b\}$ of length $k=n/2$ for an even positive number $n$. No two distinct members $x$ and $y$ of $S$ satisfy $x \equiv_{\mathtt{PAL},n,k} y$, and there are $2^{\Theta(n)}$ members of $S$. We conclude using Lemma \ref{lemma:1waychar} that $\mathtt{PAL} \notin \mathsf{SPACE(1)}/poly(\mathtt{1w\mbox{-}input})$.

\end{proof}

Since a machine with real-time input does not have time to consume more than a linear amount of advice, we easily have
\begin{corollary}
For every function $f:\mathbb{N}\rightarrow\mathbb{N}$,
$\mathtt{PAL} \notin \mathsf{SPACE(1)}/f(n)(\mathtt{rt\mbox{-}input})$.
\label{CRL_PAL_NOT_IN_REST}
\end{corollary}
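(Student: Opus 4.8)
The plan is to derive the corollary directly from Theorem~\ref{theorem:palnotin1w1w} by observing that real-time input is a special case of one-way input, but with the crucial extra constraint that the input head advances at every step, so the machine halts in exactly $n$ steps (up to a constant for the endmarker). First I would note that any dfat with a real-time input head using advice of length $O(f(n))$ is in particular a dfat with a one-way input head; hence $\mathsf{SPACE(1)}/f(n)(\mathtt{rt\mbox{-}input}) \subseteq \mathsf{SPACE(1)}/f(n)(\mathtt{1w\mbox{-}input})$ for every $f$. If $f(n) \in O(\mathrm{poly}(n))$, Theorem~\ref{theorem:palnotin1w1w} immediately gives $\mathtt{PAL} \notin \mathsf{SPACE(1)}/f(n)(\mathtt{rt\mbox{-}input})$ and we are done. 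So the only case requiring an argument is when $f$ grows super-polynomially (indeed, super-linearly is already the interesting regime).

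For that case, the key step is the ``not enough time'' remark in the corollary's preamble: a real-time input head reaches the right endmarker after exactly $n+O(1)$ steps, at which point the machine must halt. During those $n+O(1)$ steps the one-way advice head can move at most one cell per step, so it can scan at most $n+O(1)$ cells of the advice tape. Consequently, only the length-$O(n)$ prefix of $h(n)$ can ever influence the computation; any longer advice is simply unread. Formally, I would argue that if $M$ recognizes $L$ with real-time input and advice function $h$ of arbitrary length, then the advice function $h'(n) := $ the first $n+c$ symbols of $h(n)$ (for the appropriate constant $c$ bounding the endmarker overhead) induces exactly the same computation of $M$ on every input of length $n$, so $M$ recognizes $L$ with the linear-length advice $h'$. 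Hence $\mathsf{SPACE(1)}/f(n)(\mathtt{rt\mbox{-}input}) \subseteq \mathsf{SPACE(1)}/n(\mathtt{rt\mbox{-}input}) \subseteq \mathsf{SPACE(1)}/n(\mathtt{1w\mbox{-}input}) \subseteq \mathsf{SPACE(1)}/poly(\mathtt{1w\mbox{-}input})$, and Theorem~\ref{theorem:palnotin1w1w} finishes the proof.

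I do not expect any real obstacle here; the corollary is essentially a one-line consequence once the ``real-time implies linear advice suffices'' observation is made precise. The only point that needs a little care is handling the end-of-tape behavior correctly: one must confirm that the machine genuinely cannot keep computing after the input head leaves the input (it is forced to move right at every step and cannot go past the endmarker, so it must halt, as stipulated in the model description in Section~\ref{sec:ourmodel}), which is what caps the running time and hence the consumed advice length at $O(n)$. With that settled, the chain of inclusions above, together with Theorem~\ref{theorem:palnotin1w1w}, yields the claim for every advice length function $f$.
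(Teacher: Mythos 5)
Your proof is correct and takes essentially the same route as the paper: the corollary is stated as an immediate consequence of Theorem~\ref{theorem:palnotin1w1w}, justified by exactly your observation that a real-time input head forces the machine to halt within $n+O(1)$ steps, so the one-way advice head can consume only a linear prefix of the advice, reducing to the polynomial-advice one-way-input case already ruled out. Your write-up merely makes the paper's one-sentence justification explicit.
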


A natural question that arises during the study of advised computation is whether the model under consideration is strong enough to recognize $every$ desired language. The combination of two-way input tape head and exponentially long advice can be shown to give this power to finite automata. Let $\mathsf{ALL}$ denote the class of all languages on the input alphabet $\Sigma$.  

\begin{theorem} \label{theorem:all}
$\mathsf{SPACE(1)}/exp(\mathtt{rt\mbox{-}advice})=\mathsf{ALL}$.
\end{theorem}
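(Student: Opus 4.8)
Only the inclusion $\mathsf{ALL}\subseteq\mathsf{SPACE(1)}/exp(\mathtt{rt\mbox{-}advice})$ is nontrivial, so I would fix an arbitrary language $L\subseteq\Sigma^*$ and build a dfat with a two-way input head, a real-time advice head, and exponentially long advice that recognizes $L$. The idea is that a two-way input head can be rewound, so the machine can afford to test its input against a list of \emph{all} candidate strings written on the advice tape. For inputs of length $n$, let the advice be the concatenation $B_1B_2\cdots B_{|\Sigma|^n}$ of one block per string $w_j\in\Sigma^n$, taken in lexicographic order, where $B_j=\#\,w_j\,b_j\,\Box^{\,n}$; here $\#$ and $\Box$ are fresh symbols, and $b_j\in\{c_a,c_r\}$ records whether $w_j\in L$ (as with the advice in the proof of the two-advice-tape theorem of Section~\ref{sec:ourmodel}). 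The total length is $(2n{+}2)\,|\Sigma|^n=2^{O(n)}$, well within the $exp$ bound.

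The machine $M$ maintains the invariant that at the start of block $B_j$ its advice head scans the leading $\#$ and its input head scans the left endmarker. From there it advances both heads and compares $w_j$ against the input symbol by symbol, moving both heads rightward in lockstep; since $|w_j|=|x|=n$, the advice head reaches $b_j$ exactly when the input head reaches the right endmarker. If this comparison succeeds --- which happens for exactly one index $j$, namely the one with $w_j=x$ --- then $M$ halts at that instant, accepting iff $b_j=c_a$. If instead a mismatch is detected, $M$ abandons the block: it keeps moving the advice head rightward (it has no choice, being real-time) until that head reaches the $\#$ that begins $B_{j+1}$, while simultaneously walking the input head left until it meets the left endmarker and then leaving it parked there, which restores the invariant for $B_{j+1}$. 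Because every $x\in\Sigma^n$ occurs as some $w_j$, $M$ always locates its block and halts.

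The one delicate point --- and the step I expect to be the crux --- is timing: because the advice head cannot pause, $M$ must be able to rewind the input head back to the left endmarker before the advice head arrives at the next block marker, and this is exactly what the $\Box^{\,n}$ padding buys. A direct count shows that a mismatch at input position $k\le n$ leaves the advice head $2n{+}2{-}k$ steps short of the next $\#$, whereas the input head needs only $k\le n\le 2n{+}2{-}k$ steps to reach the left endmarker, after which it simply waits there; hence the invariant is always re-established in time. Everything else is routine finite-state bookkeeping, so $M$ is well defined and recognizes $L$; since $L$ was arbitrary, $\mathsf{SPACE(1)}/exp(\mathtt{rt\mbox{-}advice})=\mathsf{ALL}$.
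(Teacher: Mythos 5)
Your proof is correct and follows essentially the same approach as the paper's: write an exponential-length list of length-$n$ strings on the advice tape with enough padding that the two-way input head can rewind to the left endmarker while the real-time advice head traverses the padding to the next block, and your timing count ($k\le n\le 2n+2-k$) is exactly the point the paper's $n+2$-blank separators are there to handle. The only (immaterial) difference is that the paper lists just the members of $L$ and rejects when the advice is exhausted without a match, whereas you list all of $\Sigma^n$ with a membership flag per block.
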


\begin{proof}
The advice string for input length $n$ contains all members of the considered language of length $n$, separated by substrings consisting of $n+2$ blank symbols. The automaton compares the input with each of the strings listed on the advice tape. If it is able to match the input to a word on the advice tape, it accepts. If the advice ends without such a match, it rejects. Otherwise, the machine rewinds to the start of the input while consuming blanks from the next string on the advice tape. The advice length is $2^{O(n)}$.
\end{proof}

Whether $\mathsf{SPACE(1)}/exp(\mathtt{1w\mbox{-}input})=\mathsf{ALL}$ is an open question. If $\mathtt{PAL}\in \mathsf{SPACE(1)}/exp(\mathtt{1w\mbox{-}input})$ also remains unknown to us. But we are able to prove a separation between classes corresponding to machines with one-way versus two-way input that are confined to polynomial-length advice, as the following theorem shows.

\begin{theorem} 
$
	\mathsf{SPACE(1)}/poly(\mathtt{1w\mbox{-}input}) \subsetneq  \mathsf{SPACE(1)}/poly(\mathtt{2w\mbox{-}input})
$.
\end{theorem}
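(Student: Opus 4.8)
The plan is to dispatch the inclusion trivially and then get the strictness from a single witness language. The inclusion $\mathsf{SPACE(1)}/poly(\mathtt{1w\mbox{-}input}) \subseteq \mathsf{SPACE(1)}/poly(\mathtt{2w\mbox{-}input})$ needs no argument: a one-way input head is just a two-way input head that happens never to move left, so any dfat (together with its advice function) witnessing membership of a language on the left-hand side already witnesses it on the right-hand side. For the proper containment I would take $\mathtt{PAL}$, the language of even-length palindromes over $\{a,b\}$, as the separating witness. Theorem~\ref{theorem:palnotin1w1w} already supplies $\mathtt{PAL} \notin \mathsf{SPACE(1)}/poly(\mathtt{1w\mbox{-}input})$, so the whole job reduces to proving $\mathtt{PAL} \in \mathsf{SPACE(1)}/poly(\mathtt{2w\mbox{-}input})$.

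To do that I would build, for each input length $n$, an advice string of length $O(n^2)$ acting as a list of ``position locators.'' For even $n$ set $h(n) = B_1 B_2 \cdots B_{n/2}$, where $B_i = \#^i \$ \#^i \$$; for odd $n$ let $h(n)$ be a single symbol that forces rejection. The advice head is one-way but may pause, which is enough. The two-way dfat processes the blocks left to right, and while scanning $B_i$ it: (i) moves the input head from the left endmarker rightward one cell per $\#$ in the first run, thereby landing on cell $i$, and records the symbol it reads there in its finite control; (ii) pauses the advice head and runs the input head right to the right endmarker; (iii) moves the input head leftward one cell per $\#$ in the second run, landing on cell $n+1-i$, and rejects immediately if the symbol there differs from the recorded one; (iv) pauses the advice head, returns the input head to the left endmarker, and advances the advice head past the trailing $\$$ to the start of $B_{i+1}$. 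If all blocks are processed without a mismatch, the machine accepts. Since $|B_i| = 2i + 2$, we get $|h(n)| = \sum_{i=1}^{n/2}(2i+2) = O(n^2)$, which is polynomial, and the machine accepts an input $x$ of length $n$ iff $x$ has cell $i$ equal to cell $n+1-i$ for every $i \le n/2$, i.e.\ iff $x \in \mathtt{PAL}$. Hence $\mathtt{PAL} \in \mathsf{SPACE(1)}/poly(\mathtt{2w\mbox{-}input})$, and together with Theorem~\ref{theorem:palnotin1w1w} this gives the strict inclusion.

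The only real content is the construction above, and the one subtle point — the ``obstacle,'' such as it is — is that the advice head cannot backtrack, so a single length-$i$ locator cannot be reused both to walk the input head out to cell $i$ and to walk it back in to cell $n+1-i$; this is exactly why each block carries two copies of its locator, and also why the advice ends up costing $\Theta(n^2)$ rather than $\Theta(n)$. Everything else is bookkeeping: rejecting odd lengths (handled by the advice for odd $n$), detecting the two endmarkers so the input head knows when to turn around, and checking the off-by-one cell arithmetic so that cell $i$ is indeed matched against cell $n+1-i$.
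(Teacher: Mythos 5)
Your proposal is correct and follows essentially the same route as the paper: both establish the inclusion trivially, use $\mathtt{PAL}$ with Theorem~\ref{theorem:palnotin1w1w} for the lower bound, and then give an $O(n^2)$-length advice string that marks, for each $i \le n/2$, the pair of positions $(i, n+1-i)$ so that the two-way input head can check one pair per pass. The only difference is cosmetic: the paper encodes each pair as an $n$-bit marker string with two $1$'s and lets the automaton sweep back and forth reading the advice in real time, whereas you use unary locators with a pausing advice head and rewind the input head to the left end after each pair; both yield the same quadratic advice bound and the same conclusion.
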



\begin{proof}
We already showed in Theorem \ref{theorem:palnotin1w1w} that polynomial-length advice is no help for dfat's with one-way input for recognizing $\mathtt{PAL}$. To prove the present theorem, we shall describe how a two-way dfa with real-time access to a quadratic-length advice string can recognize $\mathtt{PAL}$. On an input of length $n$, the advice tells the automaton to reject if $n$ is odd. For even $n$, the advice assists the automaton by simply marking the $n/2$ pairs $(i, n-i+1)$ of positions that should be holding matching symbols on the input string. Consider, for example $h(8) = \#10000001\#01000010\#00100100\#00011000\#$. The automaton should just traverse the input from the first symbol to the last while also traversing the part of the advice that lies between two separator symbols $(\#)$, and then do the same while going from the last symbol to the first, and so on. At each pass, the automaton should check whether the input symbols whose positions match those of the two $1$'s on the advice are identical. If this check fails at any pass, the automaton rejects the input, otherwise, it accepts.

The method described above requires a two way automaton with real-time access to an advice of length $n^2 / 2$. (The separator symbols are for ease of presentation, and are not actually needed for the construction.) 
\end{proof}

\section{Randomized advice for deterministic machines and vice versa}

We now turn to randomly selected probabilistic advice given to deterministic machines. Yamakami \cite{Ya10} proved that this setup yields an improvement in language recognition power over $\mathsf{REG}/n$, by demonstrating a deterministic automaton with advice track recognizing the center-marked palindrome language with randomized advice. Considering the amount of randomness involved in the selection of the advice string as a resource, Yamakami's example requires $O(n)$ random bits, since it requires picking a string from a set with $2^{O(n)}$ elements with uniform probability. Furthermore, reducing the error bound of Yamakami's automaton to smaller and smaller values requires extending the advice alphabet to bigger and bigger sizes. In the construction we will present in Theorem \ref{theorem:randadvice}, the number of random bits  does not depend on the input length, and any desired error bound can be achieved without modifying the advice alphabet.

\begin{theorem} \label{theorem:randadvice}
$ 	\mathsf{SPACE(1)}/n(\mathtt{1w\mbox{-}input})  \subsetneq \mathsf{SPACE(1)}/Rn(\mathtt{1w\mbox{-}input,bounded\mbox{-}error})
$.
\end{theorem}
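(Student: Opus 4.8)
The inclusion is trivial: a dfat using a deterministic advice function $h$ with $|h(n)|\in O(n)$ is the special case of one using randomized advice whose distribution, for each $n$, is concentrated on the single string $h(n)$, and this recognizes the same language with zero, hence bounded, error. So $\mathsf{SPACE(1)}/n(\mathtt{1w\mbox{-}input})\subseteq\mathsf{SPACE(1)}/Rn(\mathtt{1w\mbox{-}input,bounded\mbox{-}error})$, and the real task is to separate the two classes. The plan is to do this with the language $\mathtt{L_2}$ defined above: by Lemma~\ref{lemma:Li_Not_in_Ni-1} (case $i=2$) we already have $\mathtt{L_2}\notin\mathsf{SPACE(1)}/n(\mathtt{1w\mbox{-}input})$, so it remains to exhibit a bounded-error randomized-advice dfat with one-way input and linear advice recognizing $\mathtt{L_2}$.

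The construction goes as follows. After a deterministic check that the input has the shape $c_2^{a}c_1^{b}c_0^{c}c_1^{b'}c_2^{a'}$ with $c>0$ (rejecting otherwise), membership in $\mathtt{L_2}$ amounts to the two simultaneous equalities $a=a'$ and $b=b'$. A single linear relation among the block lengths is checkable with deterministic linear advice (as in the proof of Theorem~\ref{THM_EQUAL_IN_1W_RT_LINEAR}), but two are not; a \emph{random} relation $\sigma(a-a')+\tau(b-b')=0$ with small positive integer coefficients suffices with bounded error and needs only $O(1)$ random bits. Concretely, fix a constant $m$, set $N=m+1$, pick $\sigma,\tau$ independently and uniformly from $\{1,\dots,m\}$, and let the advice for inputs of length $n$ be $e(\sigma,\tau)\,\mathtt{1}^{Nn}$, where $e(\sigma,\tau)$ is a constant-length encoding of the pair over a fixed alphabet (two symbols suffice) and $\mathtt{1}^{Nn}$ is a length-$Nn$ counter block; this advice has length $O(n)$ and the advice alphabet never changes. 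The automaton first reads $e(\sigma,\tau)$ with the input head parked, thereby entering one of the finitely many ``modes'' indexed by $(\sigma,\tau)$, and then scans the input once from left to right, advancing the advice head by $N+\sigma$ for each leading $c_2$, by $N+\tau$ for each $c_1$ preceding the $c_0$-block, by $N$ for each $c_0$, by $N-\tau$ for each $c_1$ after the $c_0$-block, and by $N-\sigma$ for each trailing $c_2$ --- every one of these is positive, so the one-way advice head is never asked to move left. It accepts iff, once the input is exhausted, the advice head rests exactly on the advice endmarker, and rejects if that head is ever pushed past the endmarker.

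Since $a+b+c+b'+a'=n$, the total advance of the advice head equals $Nn+\sigma(a-a')+\tau(b-b')$, so the machine accepts precisely when $\sigma(a-a')+\tau(b-b')=0$. If $w\in\mathtt{L_2}$ then $a=a'$ and $b=b'$, so this holds for every choice of $(\sigma,\tau)$ and $w$ is accepted with certainty. If $w$ passes the shape check but $w\notin\mathtt{L_2}$, erroneous acceptance demands $\sigma(a-a')+\tau(b-b')=0$ with $(a-a',b-b')\neq(0,0)$; since $\sigma,\tau\ge 1$ this is impossible unless $a-a'$ and $b-b'$ are both nonzero and of opposite sign, in which case the pairs $(\sigma,\tau)\in\{1,\dots,m\}^2$ solving it lie on a single line through the origin and number at most $m$. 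Hence the error probability is at most $m/m^2=1/m$, which drops below any prescribed constant as $m$ grows (increasing only the number of states, never the advice alphabet), while $\sigma,\tau$ are still chosen with $2\lceil\log_2 m\rceil=O(1)$ random bits. The same construction handles every $\mathtt{L_k}$ with $k\ge 2$ via a random relation $\sum_{j}\sigma_j(n_j-n_j')=0$.

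The step I expect to be the crux is getting bounded error from $O(1)$ randomness: the obvious modular-fingerprint approach would need a modulus that grows with $n$ and therefore $\Theta(\log n)$ random bits, so one is forced to test a random \emph{integer} relation instead. Making that work requires exploiting the $c_0$-block purely as a length filler, so that the acceptance test becomes ``the advice head lands on the endmarker''; pairing the coefficients as $\sigma$ versus $N-\sigma$ (and $\tau$ versus $N-\tau$) so that all advice-head movements stay non-negative despite $a-a'$ and $b-b'$ being possibly negative; and verifying that for any fixed nonzero $(a-a',b-b')$ the number of coefficient pairs in $\{1,\dots,m\}^2$ wrongly satisfying the relation is at most $m$. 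That last lattice-point count is the heart of the error bound.
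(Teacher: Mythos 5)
Your proposal is correct, but it separates the classes by a genuinely different route than the paper. The paper's witness language is $\mathtt{EQUAL_3}=\{w\in\{a,b,c\}^*: |w|_a=|w|_b=|w|_c\}$ (shown outside $\mathsf{SPACE(1)}/n(\mathtt{1w\mbox{-}input})$ by a direct count of $\binom{k+2}{2}=\omega(n)$ equivalence classes via Lemma~\ref{lemma:1waychar}), and its error reduction is Freivalds' polynomial fingerprint: the advice $A_i=1^i\#1^{ki^2+ki+k}$ encodes a random evaluation point $i\in\{1,\dots,s\}$, the head displacement realizes $|w|_a+|w|_bi+|w|_ci^2$, and a nonzero quadratic has at most two roots, giving error at most $2/s$. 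You instead take $\mathtt{L_2}$, piggybacking on Lemma~\ref{lemma:Li_Not_in_Ni-1} for the lower bound, and test a random integer linear relation $\sigma(a-a')+\tau(b-b')=0$ with coefficients drawn from a constant-size box, bounding the error by a lattice-point count ($\le m$ solutions on a line through the origin inside $\{1,\dots,m\}^2$, so error $\le 1/m$); your offset trick $N\pm\sigma$, $N\pm\tau$ with $N=m+1$ correctly keeps all advice-head moves positive, and the displacement identity $Nn+\sigma(a-a')+\tau(b-b')$ checks out. Both constructions deliver exactly the two features the paper emphasizes over Yamakami's example --- $O(1)$ random bits and a fixed advice alphabet with error driven down only by enlarging the state set --- so the approaches are of essentially equal strength; the paper's is marginally more self-contained (it does not lean on the hierarchy lemmas, and $\mathtt{EQUAL_3}$ is reused in Theorem~\ref{theorem:prob_auto_det_advice}), while yours generalizes a bit more transparently to testing several simultaneous equalities, as you note for $\mathtt{L_k}$.
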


\begin{proof}
We will use the language 
$\mathtt{EQUAL_3} = \{w|\ w\in \{a,b,c\}^*,  |w|_a = |w|_b = |w|_c \}$ to separate the language classes in the theorem statement. 

Let $k$ be any positive integer, $n=3k$, and consider the set $S$ of all strings of length $k$ and of the form  $a^{*}b^{*}c^{*}$. Note that $S$ has ${{k+2}\choose{2}}=\omega(n)$ members, and that  no two distinct members $x$ and $y$ of $S$ satisfy $x \equiv_{\mathtt{EQUAL_3},n,k} y$. We conclude using Lemma \ref{lemma:1waychar} that $\mathtt{EQUAL_3} \notin \mathsf{SPACE(1)}/n(\mathtt{1w\mbox{-}input,1w\mbox{-}advice})$.

To show that  $\mathtt{EQUAL_3} \in \mathsf{SPACE(1)}/Rn(\mathtt{1w\mbox{-}input,1w\mbox{-}advice, bounded \mbox{-}error})$, we will describe a set of advice strings, and show how a randomly selected member of this set can assist a one-way dfat $N$ to recognize $\mathtt{EQUAL_3}$ with overall bounded error. We shall be adapting a technique used by Freivalds in \cite{Fr79}.

 If the input length $n$ is not divisible by $3$, $N$ rejects.  If $n =3k$ for some integer $k$,  the advice is selected with equal probability from a collection of linear-size advice strings $A_i = 1^i\#1^{ki^2 + ki + k}$ for $i \in \{1,\ldots,s\}$, where $s$ is a constant. 

$N$ starts by reading the $1$'s in the advice string that precede the separator character $\#$, thereby learning the number $i$. $N$ then starts to scan the input symbols, and  moves the advice head $1$, $i$, or $i^2$ steps to the right for each $a$, $b$ or $c$ that it reads on the input tape, respectively. The input is accepted if the automaton reaches the ends of the input and advice strings simultaneously, as in the proof of Theorem \ref{THM_EQUAL_IN_1W_RT_LINEAR}. Otherwise, the input is rejected.

Note that the automaton accepts the input string $w$ if and only if the number of symbols in the advice string that comes after the separator symbol is equal to the total number of moves  made by the advice tape head while the input head scans $w$. $N$ accepts $w$ if and only if $|w|_a + |w|_bi + |w|_ci^2 = k + ki + ki^2$, which trivially holds for $w \in \mathtt{EQUAL_3}$ no matter which advice string is selected, since  $|w|_a = |w|_b = |w|_c = k$ in that case. 

If $w \notin \mathtt{EQUAL_3}$, the probability of acceptance is equal to the probability of selecting one of the roots of the quadratic equation  $(|w|_c -k) i^2 + (|w|_b -k)i +(|w|_a-k) = 0$ as the value of $i$. This probability is bounded by $\frac{2}{s}$, and can be pulled down to any desired level by picking a bigger value for $s$, and reorganizing the automaton accordingly. 

\end{proof}


Another way of integrating randomness to the original model is to employ probabilistic computation with access to deterministic advice. We show below that probabilistic automata with advice can recognize more languages with bounded error than their deterministic counterparts.

\begin{theorem} 
\label{theorem:prob_auto_det_advice}
$\mathsf{SPACE(1)}/n(\mathtt{1w\mbox{-}input}) \subsetneq\mathsf{BPSPACE(1)}/n(\mathtt{1w\mbox{-}input,bounded\mbox{-}error})$.
\end{theorem}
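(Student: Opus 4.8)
The plan is to show containment and then strict separation by exhibiting a witness language that a bounded-error pfat with one-way input and linear advice can recognize, but a deterministic fat with the same resources cannot. Containment is immediate: a dfat is a special case of a pfat whose transition function always assigns probability $1$ to a single move, so $\mathsf{SPACE(1)}/n(\mathtt{1w\mbox{-}input}) \subseteq \mathsf{BPSPACE(1)}/n(\mathtt{1w\mbox{-}input,bounded\mbox{-}error})$. For the separation, a natural witness is the language $\mathtt{PAL}$ of even-length palindromes over $\{a,b\}$: by Theorem \ref{theorem:palnotin1w1w} it is not even in $\mathsf{SPACE(1)}/poly(\mathtt{1w\mbox{-}input})$, hence a fortiori not in $\mathsf{SPACE(1)}/n(\mathtt{1w\mbox{-}input})$. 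So it suffices to place $\mathtt{PAL}$ in $\mathsf{BPSPACE(1)}/n(\mathtt{1w\mbox{-}input,bounded\mbox{-}error})$.

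First I would describe the advice: for an input of even length $n=2m$ (and the machine rejects odd-length inputs, which the advice can signal), let the advice string encode the reversed second half of a ``fingerprinting key'' — concretely, following the Freivalds-style arithmetization already used in the proof of Theorem \ref{theorem:randadvice}, I would have the advice supply, for each position $j \in \{1,\dots,m\}$, a description allowing the machine to accumulate a weighted count $\sum_j x_j \cdot r^{j}$ for the first half and $\sum_j x_{n+1-j} \cdot r^{j}$ for the second half, where $r$ is a base and arithmetic is done modulo a prime $p$. The probabilistic machine chooses $p$ (equivalently, an index into a constant-size family of primes, or picks $r$ uniformly from a polynomial-size set encoded in the advice) using its internal coin flips, runs two counters modulo $p$ — one fed by the input symbols in the first half, one fed in the second half — and accepts iff the two residues agree. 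Since the machine has only $O(1)$ states it cannot hold a counter up to $p$ internally, so the actual bookkeeping of the running weighted sums must be offloaded onto advice-head position in the manner of the advice-counting trick of Theorems \ref{THM_EQUAL_IN_1W_RT_LINEAR} and \ref{theorem:akbmck}: the advice string is laid out as blocks whose lengths encode the successive weights $r^{j} \bmod p$, and the advice head walks across the appropriate block for each input symbol read, with wrap-around markers implementing reduction mod $p$. Linear advice length suffices because there are $m$ positions and each weight-block has length $O(p)=O(1)$.

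Correctness: if $x \in \mathtt{PAL}$ the two weighted sums are literally equal as integers, so they agree mod every $p$ and the machine accepts with probability $1$. If $x \notin \mathtt{PAL}$, the two polynomials in $r$ differ, so they disagree modulo all but at most $O(\deg)$ of the primes (or for all but $O(\deg)$ of the $r$-values), whence the acceptance probability is bounded away from $1$; choosing the prime family (or the sampling set) large enough by a constant factor drives the error below $1/3$, exactly as $s$ was chosen in Theorem \ref{theorem:randadvice}. The one-way input constraint is respected because each input symbol is consumed exactly once, in order, while the advice head only moves right.

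The main obstacle is the mechanical design of the advice layout and the finite-state control that realizes ``add $x_j \cdot r^{j} \bmod p$ to a residue stored as advice-head offset'' using only one-way access to both tapes and a constant number of states — in particular, synchronizing the transition between the first-half phase and the second-half phase (the machine must know when it has read $m$ symbols without counting to $m$ internally), which I would handle by encoding a midpoint marker in the advice whose distance from the left end is exactly the length needed to be traversed during the first $m$ input steps, so that the advice head's own progress signals the halfway point. Verifying that all these blocks together have total length $O(n)$, and that no left moves are needed, is routine once the layout is fixed.
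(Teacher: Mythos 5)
Your containment step and your choice of a witness language outside $\mathsf{SPACE(1)}/n(\mathtt{1w\mbox{-}input})$ are fine, but the upper bound for $\mathtt{PAL}$ is where the argument breaks, and it breaks in a way that does not look repairable within linear advice. The fingerprint you propose compares two polynomials of degree $m=n/2$ in $r$, so the standard Freivalds error analysis (``the two values disagree for all but $O(\deg)$ of the choices'') forces the sampling set --- the family of primes $p$, or the set of values of $r$ --- to have size $\Omega(n)$; a constant-size family cannot give bounded error. Concretely, with $p=O(1)$ and $O(1)$ choices of $r$ there are only constantly many possible fingerprint tuples for the $2^{n/2}$ distinct first halves, so by pigeonhole two prefixes $u\neq u'$ receive identical fingerprints and the machine errs on one of $uu^{R}$, $u'u^{R}$ with probability at least $1/2$. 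Once $p$ (or the sampling set) must grow with $n$, your accounting collapses: each weight block encoding $r^{j}\bmod p$ in unary has length $\Theta(p)=\Omega(n)$, there are $n/2$ of them, and the advice is $\Omega(n^{2})$ long, not $O(n)$ --- your claim that each block has length ``$O(p)=O(1)$'' is exactly the false step. There is also the mechanical problem you flag but do not resolve: a one-way advice head cannot wrap around to reduce mod $p$, so the only quantity the machine can ultimately test is its total forward displacement, which in your scheme is $\Theta(np)$. Whether $\mathtt{PAL}$ lies in $\mathsf{BPSPACE(1)}/n(\mathtt{1w\mbox{-}input},\mathtt{bounded\mbox{-}error})$ at all cannot be taken for granted; note that the paper leaves even $\mathtt{PAL}\in\mathsf{SPACE(1)}/exp(\mathtt{1w\mbox{-}input})$ open.

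The paper's proof avoids all of this by separating with $\mathtt{EQUAL_3}$ instead. Membership there depends only on the three symbol counts, so the Freivalds check is a single degree-$2$ identity $|w|_a+|w|_b\, i+|w|_c\, i^2 = k+ki+ki^2$ with $k=n/3$; a nonmember survives for at most $2$ of the $s$ choices of $i$, so a \emph{constant} $s$ already yields error $2/s$, and the whole test is realized as a total-displacement check of a one-way advice head over advice of length $O(s^2 n)=O(n)$: the deterministic advice concatenates the $s$ candidate blocks and the pfat randomly jumps to one of them. If you want to salvage your write-up, replace $\mathtt{PAL}$ by $\mathtt{EQUAL_3}$, whose non-membership in $\mathsf{SPACE(1)}/n(\mathtt{1w\mbox{-}input})$ follows from Lemma \ref{lemma:1waychar} exactly as in the proof of Theorem \ref{theorem:randadvice}.
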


\begin{proof}
$\mathsf{SPACE(1)}/n(\mathtt{1w\mbox{-}input,1w\mbox{-}advice}) \subseteq \mathsf{BPSPACE(1)}/n(\mathtt{1w\mbox{-}input, 1w\mbox{-}advice,bounded\mbox{-}error})$ is by definition. So it remains to show that there is a language which can not be recognized by a one way dfat with one way access to linear-size advice but can be recognized by bounded error by a pfat with the same amount of advice. We claim that  $\mathtt{EQUAL_3}$, which was introduced and  shown to lie outside  $\mathsf{SPACE(1)}/n(\mathtt{1w\mbox{-}input,1w\mbox{-}advice})$ in the proof of Theorem \ref{theorem:randadvice}, is one such language. We now describe how to construct a one-way pfat $P$ and an associated linear-length advice function to recognize $\mathtt{EQUAL_3}$ for any specified nonzero error bound $\varepsilon < \frac{1}{2}$. The idea is reminiscent of that used for the proof of Theorem \ref{theorem:randadvice}. However we now specify a deterministic advice function which contains all the alternatives and let the probabilistic automaton randomly pick and use one.

Let $n$ denote the length of the input, and let $s=\lceil\frac{2}{\varepsilon}\rceil$. If $n$ is not divisible by $3$, the automaton rejects with probability $1$.  If $n$ is divisible by $3$, the advice is the string  $\#1^{n}\#1^{\frac{7n}{3}}\#\dots\#1^{\frac{n}{3}s^2 + \frac{n}{3}s + \frac{n}{3}}$, obtained by concatenating all the strings $\#1^{\frac{n}{3}i^2 + \frac{n}{3}i + \frac{n}{3}}$ for $i \in \{1,\ldots,s\}$ in increasing order.

$P$ starts by randomly picking an integer $i$ between 1 and $s$, and moving its advice head to the $i$'th $\#$. It then starts scanning the input, moving the advice head by $1$, $i$, or $i^2$ steps  for each $a$, $b$ or $c$, just as we had in the proof of Theorem \ref{theorem:randadvice}. It accepts if and only if the advice head reaches the next $\#$ (or the end of the advice string) simultaneously with the arrival at the end of the input. The correctness of the algorithm follows from the argument in the proof of Theorem \ref{theorem:randadvice}.

\end{proof}


\section{Quantum finite automata with advice tapes}\label{sec:quantum}
Yamakami \cite{Ya12} defined the class $\mathsf{1QFA}/n$ as the collection of languages which can be recognized by real-time Kondacs-Watrous quantum finite automata (KWqfa's) with advice tracks. The KWqfa is one of many inequivalent models of quantum finite-state computation that were proposed in the 1990's, and is known to be strictly weaker than classical finite automata in the context of bounded-error language recognition \cite{KW97}. This weakness carries over to the advised model of \cite{Ya12}, with the result that there exist some regular languages that are not members of $\mathsf{1QFA}/n$. We use a state-of-the-art model of quantum automaton that can simulate its classical counterparts trivially, \cite{Hi10,YS11A} so we have:
\begin{theorem}
$\mathsf{1QFA}/n\subsetneq\mathsf{BQSPACE(1)}/n(\mathtt{rt\mbox{-}input,rt\mbox{-}advice})$.
\end{theorem}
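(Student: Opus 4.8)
The plan is to prove the containment and the strictness separately. For the containment $\mathsf{1QFA}/n\subseteq\mathsf{BQSPACE(1)}/n(\mathtt{rt\mbox{-}input,rt\mbox{-}advice})$, I would observe that a real-time Kondacs--Watrous qfa equipped with an advice track is nothing but the special case of a qfat in which both the input head and the advice head are forced to be real-time: once both heads advance at every step they move in lockstep, so a separate advice tape scanned in this way is indistinguishable from an advice track glued onto the input tape. Since the general quantum finite automaton model of \cite{Hi10,YS11A} simulates the Kondacs--Watrous model (indeed it simulates every deterministic or probabilistic finite automaton) without disturbing the tape-access regime, every language recognized with bounded error by a KWqfa with a length-$n$ advice track is recognized with bounded error by a qfat with real-time input and real-time advice of length $O(n)$. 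This yields the inclusion.

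For strictness, I would exhibit a regular language $L$ that lies in the right-hand class but not in $\mathsf{1QFA}/n$. Any regular language is recognized by a real-time dfa that ignores its (say, trivial) advice, so $L\in\mathsf{REG}\subseteq\mathsf{REG}/n=\mathsf{SPACE(1)}/n(\mathtt{rt\mbox{-}input,rt\mbox{-}advice})\subseteq\mathsf{BQSPACE(1)}/n(\mathtt{rt\mbox{-}input,rt\mbox{-}advice})$, where the last inclusion is once more the trivial simulation of a dfa by the state-of-the-art quantum model. It therefore suffices to name a regular $L\notin\mathsf{1QFA}/n$; the natural candidate is $a^{*}b^{*}$, whose minimal dfa contains the forbidden pattern of Brodsky and Pippenger, namely two distinct states each carrying a self-loop on the same letter, with a path from one to the other but none back, which obstructs bounded-error recognition by measure-many $1$qfa's.

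The main obstacle is precisely the claim $a^{*}b^{*}\notin\mathsf{1QFA}/n$: knowing that KWqfa's \emph{without} advice fail on this language does not by itself settle it, since one must rule out that some length-$n$ advice track could rescue them. I would handle this by lifting the Kondacs--Watrous limitation argument to the product of the input with an arbitrary fixed advice string: the obstruction concerns the orbit structure of the unitary operators attached to the letters that label self-loops in the minimal dfa, and reading a fixed advice symbol in parallel with each input symbol merely relabels those letters while leaving the forbidden pattern intact, so the impossibility survives for every choice of advice. Alternatively, this separation of $\mathsf{REG}$ from $\mathsf{1QFA}/n$ is already available in \cite{Ya12} and may simply be cited, in which case the theorem reduces to the short chain of inclusions described above.
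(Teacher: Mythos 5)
Your overall structure is the same as the paper's: the inclusion follows because a KWqfa with an advice track is just a qfat with both heads forced real-time and the modern quantum model of \cite{Hi10,YS11A} simulates the Kondacs--Watrous model; strictness follows because the modern model also simulates deterministic automata, so every regular language lies in $\mathsf{BQSPACE(1)}/n(\mathtt{rt\mbox{-}input,rt\mbox{-}advice})$, while some regular language lies outside $\mathsf{1QFA}/n$. The paper does not prove this last fact either --- it asserts that the known weakness of bounded-error KWqfa's on regular languages ``carries over'' to the advised model of \cite{Ya12} --- so your fallback of simply citing \cite{Ya12} is exactly what the paper does, and with that citation your argument is complete.

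Your attempted direct proof of the key fact, however, has a genuine gap. The forbidden-pattern obstruction (Brodsky--Pippenger, Ambainis et al.) depends on the \emph{same} unitary being applied at every step along a self-loop, so that one can reason about the orbit of a single operator $U_a$ on strings $a^k b^m$ with $k$ and $m$ pumped. With a length-$n$ advice track the operator applied at position $i$ is $U_{(a,\,h(n)_i)}$, and the advice symbol $h(n)_i$ may differ at every position; moreover the whole family of operators can change as $n$ (and hence $h(n)$) changes. So it is not true that the advice ``merely relabels'' the letters while leaving the self-loop structure intact --- the self-loop is replaced by a position-dependent sequence of distinct unitaries, and the orbit argument does not go through as stated. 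This is precisely why $\mathsf{REG}\nsubseteq\mathsf{1QFA}/n$ is a nontrivial result requiring its own proof in \cite{Ya12} rather than an immediate corollary of the unadvised lower bound. Either cite that result (as the paper implicitly does) or supply a genuinely length-uniform argument; the relabeling claim alone does not suffice.
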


Whether this properly strong version of qfa can outperform its classical counterparts with advice tapes is an open question. We are able to show a superiority of quantum over classical in the following restricted setup, which may seem silly at first sight:
Call an advice tape \textit{empty} if it contains the standard blank tape symbol in all its squares. We say that a machine $M$ receives \textit{empty advice} of length $f(n)$, if it is just allowed to move its advice head on the first $f(n)$ squares of an empty advice tape, where $n$ is the input length. This restriction will be represented by the presence of the designation \texttt{empty} in the specification lists of the relevant complexity classes.
\begin{theorem}
$\mathsf{BPSPACE(1)}/n(\mathtt{rt\mbox{-}input,1w\mbox{-}empty\mbox{-}advice}) \\ 
\indent \indent \indent \indent \indent  \subsetneq\mathsf{BQSPACE(1)}/n(\mathtt{rt\mbox{-}input,1w\mbox{-}empty\mbox{-}advice})$.
\end{theorem}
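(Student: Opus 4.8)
The inclusion $\subseteq$ in the statement is immediate: the quantum model we adopt simulates probabilistic finite automata move for move (a stochastic transition being a special case of a quantum operation), and this simulation touches neither the input head mode nor the advice tape, so every language recognized with bounded error by a pfat with real-time input and one-way empty advice is recognized by a qfat with the same resources. The whole content of the theorem is therefore the properness, and the plan is to produce a witness language $L$ that some qfat of the stated type recognizes but no bounded-error pfat of the stated type does. Note that $L$ must be nonregular, since every regular language already lies in the deterministic real-time empty-advice class.

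For the positive side I would take $L$ to be a suitable nonregular counting language and design a qfat that uses the blank advice tape purely as quantum workspace. While its input head runs in real time, the machine first fans its computation into a superposition of mutually orthogonal threads by advancing the advice head into a superposition of distinct positions --- the orthogonality of advice-head positions making the threads genuinely independent, each carrying its own internal label; in each thread it then moves the advice head at a thread-dependent bounded rate so as to run a distinct linear test on the symbol counts of the input, exactly in the spirit of the $\mathtt{EQUAL}$-type constructions behind Theorems \ref{THM_EQUAL_IN_1W_RT_LINEAR} and \ref{theorem:randadvice}; and finally it recombines the threads by a second interference step arranged so that the accepting amplitude survives for members of $L$ but is cancelled down to a safely bounded level for non-members. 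The point to make is that nothing but the empty advice tape is needed both to set up and to collapse this superposition.

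The hard part will be the classical lower bound, $L\notin\mathsf{BPSPACE(1)}/n(\mathtt{rt\mbox{-}input,1w\mbox{-}empty\mbox{-}advice})$. The guiding observation is that a bounded-error pfat with real-time input and empty advice has essentially one nontrivial resource beyond an ordinary real-time pfa, namely the position of its monotone advice head; and because the advice is blank, this position at any step is just a bounded nonnegative integer linear combination --- with coefficients depending only on the finite internal phase --- of the symbol counts read so far, the sole test at the end being whether that value equals the fixed number $f(n)$. A Rabin-style isolated-cut-point argument adapted to this setting (a probabilistic analogue of Lemma \ref{lemma:1waychar}) should then show that the only recognizable languages are finite randomized combinations of regular languages with languages of the form $\{w:\sum_\sigma c_\sigma|w|_\sigma=f(|w|)\}$, and $L$ would be chosen precisely to escape that class. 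I expect this step to be the real obstacle, because once the monotone advice head is present Rabin's theorem no longer applies verbatim and one must quantify carefully how much a single monotone integer can buy a bounded-error machine; by contrast the qfat construction, though delicate, only extends patterns already used for the $\mathtt{EQUAL}$-type languages.
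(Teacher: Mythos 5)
There is a genuine gap, and it lies exactly where you predicted the difficulty would be: the classical lower bound. The intended reading of ``empty advice'' --- and the one the paper relies on --- is that the advice tape degenerates into a \emph{blind}, increment-only counter: every cell is the same blank symbol, there is no decrement and no zero-test, so a classical machine's transition function can never depend on where the advice head is. Consequently a bounded-error pfat with real-time input and empty advice is literally just a real-time bounded-error pfa, and by Rabin's theorem the class collapses to the regular languages; no adapted cut-point analysis or characterization in terms of ``randomized combinations of regular languages with linear-count languages'' is needed. Your proposed analysis instead assumes the machine can test ``whether that value equals the fixed number $f(n)$,'' i.e.\ that the end of the blank advice tape is detectable. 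That assumption is not just unnecessary --- it would undermine the theorem: a blank tape of detectable length is equivalent to unary advice, and the construction in Theorem~\ref{theorem:equal} shows that already a \emph{deterministic} real-time machine with unary advice $a^{n/2}$ recognizes $\mathtt{EQUAL_2}$. So under your semantics the natural witness languages fall into the classical class and the separation you are aiming for would have to be reworked entirely.

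On the quantum side you also stop short of a proof: you never fix a concrete $L$, and the ``recombine the threads by a second interference step'' is delicate precisely because threads whose advice heads sit at different positions are orthogonal and cannot interfere; the known constructions arrange for the paths of members to return to \emph{identical} counter values before the final measurement. The paper sidesteps all of this by observing that a one-way empty advice tape of linear length \emph{is} a real-time increment-only blind counter, and then citing the result of Yakary{\i}lmaz et al.\ that real-time qfa's equipped with such a counter recognize nonregular languages such as $\mathtt{EQUAL_2}$ with bounded error, while classical automata gain nothing from it. Your high-level intuition about using superposition over advice-head positions is the right one, but as written neither direction of the separation is established.
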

\begin{proof}
	An empty advice tape can be seen as an increment-only counter, where each move of the advice tape head corresponds to an incrementation on the counter, with no mechanism for decrementation or zero-testing provided in the programming language. In \cite{YFSA11A}, Yakary{\i}lmaz et al.  studied precisely this model. It is obvious that classical automata augmented with such a counter do not gain any additional computational power, so  $\mathsf{BPSPACE(1)}/n(\mathtt{rt\mbox{-}input,1w\mbox{-}empty\mbox{-}advice})$ equals the class of regular languages, just like the corresponding class without advice. On the other hand, real-time qfa's augmented with such an increment-only counter were shown to recognize some nonregular languages like $ \mathtt{EQUAL_2} $ with bounded error in \cite{YFSA11A}.

\end{proof}

Since increment-only counters are known to increase the computational power of real-time qfa's in the unbounded-error setting as well, \cite{YFSA11A}, we can also state

\begin{theorem}
$\mathsf{PrSPACE(1)}/n(\mathtt{rt\mbox{-}input,1w\mbox{-}empty\mbox{-}advice}) \\ 
\indent \indent \indent \indent \indent  \subsetneq\mathsf{PrQSPACE(1)}/n(\mathtt{rt\mbox{-}input,1w\mbox{-} empty\mbox{-}advice})$.
\end{theorem}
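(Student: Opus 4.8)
The plan is to repeat, almost verbatim, the argument used for the preceding theorem, only replacing ``bounded error'' by ``unbounded error'' (i.e., cutpoint) recognition throughout. As there, the crucial observation is that an empty one-way advice tape on whose first $n$ squares the head is permitted to roam is nothing but an \emph{increment-only counter}: the only thing the head's position records is how many advice moves have been made so far, and the automaton is never allowed to decrement this quantity, to test it for zero, or otherwise to read it. Hence a finite automaton with \texttt{rt-input} and \texttt{1w-empty-advice} of length $n$ is exactly a real-time finite automaton equipped with such a counter, the model studied in \cite{YFSA11A}. The inclusion $\mathsf{PrSPACE(1)}/n(\mathtt{rt\mbox{-}input,1w\mbox{-}empty\mbox{-}advice}) \subseteq \mathsf{PrQSPACE(1)}/n(\mathtt{rt\mbox{-}input,1w\mbox{-}empty\mbox{-}advice})$ then holds by definition, since our quantum model subsumes the probabilistic one move for move.

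For the left-hand class I would argue that it collapses to the family of stochastic languages, i.e., to the unadvised class $\mathsf{PrSPACE(1)}(\mathtt{rt\mbox{-}input})$. Indeed, since the counter can never be inspected (no zero-test, no reading, and, by the \texttt{empty} restriction, no detection of its upper bound), its value never enters the transition function; projecting the counter away from a probabilistic real-time automaton-with-counter yields an ordinary probabilistic real-time automaton that induces the same acceptance probability on every input, and hence recognizes the same language with the same cutpoint. This is precisely the (easy) classical half of the analysis in \cite{YFSA11A}.

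For the right-hand class I would invoke the quantum half of \cite{YFSA11A}, which shows that a real-time quantum finite automaton augmented with an increment-only counter can recognize, with unbounded error, some language that is not stochastic. Such a language therefore lies in $\mathsf{PrQSPACE(1)}/n(\mathtt{rt\mbox{-}input,1w\mbox{-}empty\mbox{-}advice})$ but not in $\mathsf{PrSPACE(1)}/n(\mathtt{rt\mbox{-}input,1w\mbox{-}empty\mbox{-}advice})$, which, combined with the inclusion above, yields the strict containment.

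The only point requiring a little care --- and the one I would single out as the main obstacle --- is the collapse of the classical side: one must be sure that the \texttt{empty} convention really does deny the machine every possible way of observing the counter, so that the projection argument is valid. Once this is verified, everything else is inherited directly from \cite{YFSA11A}, exactly as in the proof of the previous theorem.
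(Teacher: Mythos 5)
Your proposal is correct and follows essentially the same route as the paper, which also reduces the empty advice tape to an increment-only counter, collapses the classical side to the unadvised (stochastic) class, and cites the unbounded-error quantum result of \cite{YFSA11A} for the separation. The extra care you devote to verifying that the \texttt{empty} convention truly hides the counter's value is a reasonable elaboration of a step the paper treats as obvious.
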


\section{Open questions}
\begin{itemize}
\item Real-time probabilistic automata can be simulated by deterministic automata which receive coin tosses within a randomly selected advice string. 
It would be interesting to explore the relationship between  deterministic automata working with randomized advice, and probabilistic automata working with deterministic advice further.
\item Are there languages which cannot be recognized with any amount of advice by a dfat with one-way input? Does the answer change for pfat's or qfat's?
\item Can qfat's recognize any language which is impossible for pfat's with non-empty advice?
\end{itemize}

\section*{Acknowledgments}
We thank G\"{o}kalp Demirci, \"{O}zlem Salehi, and an anonymous reviewer for an earlier version of this manuscript for their helpful comments.

\bibliographystyle{splncs}
\bibliography{arXiv_finite_automata_with_advice_tapes}

\end{document}